\newtheorem{theorem}{Theorem}
\newtheorem{corollary}[theorem]{Corollary}
\newtheorem{lemma}[theorem]{Lemma}
\newtheorem{proposition}[theorem]{Proposition}
\newtheorem{definition}[theorem]{Definition} \theoremstyle{remark}
\newcommand{\comment}[1]{}
\newcommand{\R}{\mathbb R}
\newcommand{\la}{\lambda}
\newcommand{\de}{{\rm d}}
\newcommand{\supp}{{\rm supp}}
\newcommand{\Vol}{{\rm Vol}}
\begin{document}

\large \title{On the entropy power inequality for the R\'enyi entropy of order $[0,1]$}

\author{Arnaud Marsiglietti\footnote{Supported in part by the Walter S. Baer and Jeri Weiss CMI Postdoctoral Fellowship. } \hspace{2mm} and James Melbourne\footnote{Supported by  NSF grant CNS 1544721.\newline \indent Parts of this paper were presented at the 2018 IEEE International Symposium on Information Theory, Vail, CO, USA, June 2018.}}

\date{}

\maketitle

\begin{abstract}

Using a sharp version of the reverse Young inequality, and a R\'enyi entropy comparison result due to Fradelizi, Madiman, and Wang (2016), the authors derive R\'enyi entropy power inequalities for log-concave random vectors when R\'enyi parameters belong to $[0,1]$. Furthermore, the estimates are shown to be sharp up to absolute constants. 

\end{abstract}

\noindent {\bf Keywords.}
Entropy power inequality, R\'enyi entropy, log-concave.

\section{Introduction}

Let $r \in [0,\infty]$. The R\'enyi entropy \cite{Ren61} of parameter $r$ is defined for continuous random vectors $X \sim f_X$ as
\begin{equation} \label{eq:definition of Renyi entropy}
    h_r(X) = \frac{1}{1-r} \log \left( \int_{\R^n} f_X^r(x) dx \right).
\end{equation}
We take the R\'enyi entropy power of $X$ to be
\begin{equation} \label{eq:definition of Renyi entropy POWER}
    N_r(X) = e^{\frac{2}{n} h_r(X)} = \left( \int_{\R^n} f_X^r(x) \, dx \right)^{\frac{2}{n} \frac{1}{1-r}}.
\end{equation}  

Three important cases are handled by continuous limits,
\begin{equation} \label{eq: 0 Renyi entropy power defn}
    N_0(X) = \Vol^{\frac 2 n}(\supp(X)), 
\end{equation}
\begin{equation} \label{eq: infinity renyi power defn}
    N_\infty(X) = \|f_X\|_\infty^{-2/n},
\end{equation}  
and $N_1(X)$ corresponds to the usual Shannon entropy power $N_1(X)= N(X) = e^{- \frac 2 n \int f \log f }$. Here, $\Vol(A)$ denotes the Lebesgue measure of a measurable set $A$, and $\supp(X)$ denotes the support of $X$.

The entropy power inequality (EPI) is the statement that Shannon entropy power of independent random vectors $X$ and $Y$ is super-additive 
\begin{equation} \label{eq:Shannon EPI}
	N(X+Y) \geq N(X) + N(Y).
\end{equation}
In this language we interpret the  Brunn-Minkowski inequality of Convex Geometry, classically stated as the fact that 
\begin{eqnarray} \label{eq: Brunn Minkowski}
\Vol(A+B) \geq \left(\Vol^{\frac 1 n}(A) + \Vol^{\frac 1 n}(B) \right)^n
\end{eqnarray}
for any pair of compact sets of $\mathbb{R}^n$ (see \cite{Gar02} for an introduction to the literature surrounding this inequality), as a R\'enyi-EPI corresponding to $r=0$.  That is, the Brunn-Minkowski inequality is equivalent to the fact that for independent random vectors $X$ and $Y$, the square root of the $0$-th R\'enyi entropy is super-additive,
\begin{equation} \label{eq:Brunn Minkowski as an REPI}
	N_0^{\frac 1 2}(X+Y) \geq N_0^{\frac 1 2}(X) + N_0^{\frac 1 2}(Y).
\end{equation}

The parallels between the two famed inequalities had been observed in the 1984 paper of Costa and Cover \cite{CC84}, and a unified proof using sharp Young's inequality was given in 1991 by Dembo, Cover, and Thomas \cite{DCT91}. Subsequently, analogs of further Shannon entropic inequalities and properties in Convex Geometry have been pursued.  For example the monotonicity of entropy in the central limit theorem (see \cite{ABBN04:1, MB07, TV06}),  motivated the investigation of quantifiable convexification of a general measurable set on repeated Minkowski summation with itself (see \cite{FMMZ16, FMMZ16:cras}). Motivated by Costa's EPI improvement \cite{Cos85b}, Costa and Cover conjectured that the volume of general sets when summed with a dilate of the Euclidean unit ball should have concave growth in the dilation parameter \cite{CC84}.   Though this was disproved for general sets in \cite{FM14}, open questions of this nature remain.

Conversely, V. Milman's reversal of the Brunn-Minkowski inequality (for symmetric convex bodies under certain volume preserving linear maps) \cite{Mil86} inspired Bobkov and Madiman to ask and answer whether the EPI could be reversed for log-concave random vectors under analogous mappings \cite{BM12:jfa}. In \cite{BM11:it} The authors also formulated an entropic version of Bourgain's slicing conjecture \cite{Bou86}, a longstanding open problem in convex geometry that has attracted a lot of attention.

A further example of an inequality at the interface of geometry and information theory can be found in \cite{BNT15}, where Ball, Nayar, and Tkocz conjectured the existence of an entropic Busemann's inequality \cite{Bus49}  for symmetric log-concave random variables and prove some partial results, see \cite{XMM16:isit} for an extension to ``$s$-concave'' random variables.

We refer to the survey \cite{MMX17:1} for further details on the connections between convex geometry and information theory.

Recently the super-additivity of more general R\'enyi functionals has seen significant activity, starting with Bobkov and Chistyakov \cite{BC14,BC15:1} where it is shown (the former focusing on $r=\infty$ the latter on $r \in (1,\infty)$) that for $r \in (1,\infty]$ there exist universal constants $c(r) \in (\frac 1 e, 1)$ such that for $X_i$ independent random vectors
\begin{equation} \label{eq:Bobkov Chistyakov REPI}
N_r(X_1 + \cdots + X_k) \geq c(r) \sum_{i=1}^k N_r(X_i).
\end{equation}
This was followed by Ram and Sason \cite{RS16} who used optimization techniques to sharpen bounds on the constant $c(r)$, which should more appropriately be written $c(r,k)$ as the authors were able to clarify the dependency on the number of summands as well as the R\'enyi parameter $r$.  Bobkov and Marsiglietti \cite{BM16} showed that for $r \in (1,\infty)$, there exists an $\alpha$ modification of the R\'enyi entropy power that preserved super-additivity.  More precisely taking $\alpha = \frac{r+1}{2}$, $r \in [1,\infty)$, and $X,Y$ independent random vectors 
\begin{equation} \label{eq:Bobkov Marsiglietti REPI}
	N_r^\alpha(X+Y) \geq N_r^\alpha(X) + N_r^\alpha(Y).
\end{equation}
This was sharpened by Li \cite{Li17} who optimized the argument of Bobkov and Marsiglietti.  The case of $r = \infty$ was studied using functional analytic tools by Madiman, Melbourne, and Xu \cite{MMX17:2, XMM17:isit:1} who showed that the $N_\infty$ functional enjoys an analog of the matrix generalizations of Brunn-Minkowski and the Shannon-Stam EPI due to Feder and Zamir \cite{ZF93,ZF98} and began investigation into discrete versions of the inequality in \cite{XMM17:isit:2}.

Conspicuously absent from the discussion above, and mentioned as an open problem in \cite{BC15:1,Li17,MMX17:1,RS16} are super-additivity properties of the R\'enyi entropy power when $r \in (0,1)$. In this paper, we address this problem, and provide a solution in the log-concave case (see Definition \ref{log-concave}). Our first main result is the following.

\begin{theorem}\label{main}
	
	Let $r \in (0,1)$. Let $X,Y$ be log-concave random vectors in $\R^n$. Then,
	\begin{eqnarray}\label{eq:main}
	N_r^{\alpha}(X+Y) \geq N_r(X)^{\alpha} + N_r(Y)^{\alpha},
	\end{eqnarray}
	where
	\begin{eqnarray}
		\alpha \triangleq \alpha(r)
		= 
		\frac{(1-r) \log 2 } { (1+r) \log(1+r) + r \log \frac 1  {4r}}.		\label{eq: alpha definition}
	\end{eqnarray} 
	
\end{theorem}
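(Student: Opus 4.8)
The plan is to combine the sharp reverse Young (Leindler--Brascamp--Lieb) inequality with the R\'enyi entropy comparison of Fradelizi--Madiman--Wang to produce a one-parameter family of multiplicative R\'enyi entropy power inequalities, and then to optimize the parameter against the binary entropy function to recover the additive form with exponent $\alpha(r)$. Write $f_X,f_Y$ for the densities of $X,Y$, so $X+Y$ has the (log-concave) density $f_X*f_Y$. Since $N_r$ is $2$-homogeneous, $N_r(cZ)=c^2N_r(Z)$, rescaling $X,Y$ by a common factor lets us assume $N_r(X)^{\alpha}+N_r(Y)^{\alpha}=1$; writing $N_r(X)=\theta^{1/\alpha}$, $N_r(Y)=(1-\theta)^{1/\alpha}$ with $\theta\in(0,1)$, inequality \eqref{eq:main} becomes the assertion $N_r(X+Y)\gr 1$.

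\emph{Core estimate.} Fix $\la\in(0,1)$ and set $p=p(\la)=\frac{r}{r+(1-r)\la}$, $q=q(\la)=\frac{r}{r+(1-r)(1-\la)}$; then $p,q\in(r,1)$ and $\frac1p+\frac1q=1+\frac1r$. The sharp reverse Young inequality applied to $f_X*f_Y$ with these exponents gives
\begin{equation*}
\norm{f_X*f_Y}_r\ \gr\ \Big(\frac{A_pA_q}{A_r}\Big)^{n}\norm{f_X}_p\norm{f_Y}_q,
\end{equation*}
with $A_s$ the sharp Young constant ($A_s^2=s^{1/s}(\tfrac{s}{1-s})^{(1-s)/s}$ for $s\in(0,1)$). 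Taking logarithms and using $\log\norm{f_Z}_s=\frac{1-s}{s}h_s(Z)$, then applying the Fradelizi--Madiman--Wang bound $h_r(Z)-h_s(Z)\ls n\big(\frac{\log r}{r-1}-\frac{\log s}{s-1}\big)$ (valid for log-concave $Z$ and $r<s$, with equality for one-dimensional exponentials) to the factors $h_p(X),h_q(Y)$, and finally invoking $\frac1p+\frac1q-2=\frac{1-r}{r}$ --- which is exactly why the choice of $p(\la),q(\la)$ turns the coefficients of $h_r(X),h_r(Y)$ into $\la$ and $1-\la$ --- one obtains
\begin{equation*}
N_r(X+Y)\ \gr\ e^{2G(\la)}\,N_r(X)^{\la}\,N_r(Y)^{1-\la},\qquad \la\in(0,1),
\end{equation*}
where $G$ is an explicit function of $r,\la$ (independent of $X,Y,n$), symmetric about $\la=\tfrac12$, with $G(0)=G(1)=0$ and $2G(\tfrac12)=\frac{(1+r)\log(1+r)+r\log\frac1{4r}}{1-r}$, i.e.\ $G(\tfrac12)=\frac{\log2}{2\alpha(r)}$.

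\emph{Reduction to a one-variable inequality.} Put $\la=\theta$ in the core estimate: with the normalization above,
\begin{equation*}
N_r(X+Y)\ \gr\ e^{2G(\theta)}\,\theta^{\theta/\alpha}(1-\theta)^{(1-\theta)/\alpha}\ =\ \exp\!\Big(2G(\theta)-\tfrac1{\alpha}H(\theta)\Big),\qquad H(t)=-t\log t-(1-t)\log(1-t).
\end{equation*}
Hence $N_r(X+Y)\gr1$, and thus the theorem, will follow once we show the pointwise bound $2\,\alpha(r)\,G(\la)\ \gr\ H(\la)$ for all $\la\in(0,1)$. By construction $\alpha(r)$ is the constant for which this holds with equality at $\la=\tfrac12$ (there $2\alpha(r)G(\tfrac12)=\log2=H(\tfrac12)$), so what remains is to prove that $\la=\tfrac12$ is the global maximizer of $H(\la)/G(\la)$.

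This last, purely analytic, step is the main obstacle. The natural approach is to show that $\Phi(\la):=2\alpha(r)G(\la)-H(\la)$ is concave on $[0,\tfrac12]$: since $\Phi(0)=\Phi(\tfrac12)=0$, concavity forces $\Phi\gr0$ on $[0,\tfrac12]$, and the symmetry $\Phi(\la)=\Phi(1-\la)$ then gives $\Phi\gr0$ on all of $[0,1]$. Concavity of $\Phi$ is consistent with the endpoint behaviour --- as $\la\to0^{+}$ one has $G(\la)\sim-\tfrac12\la\log\la$ while $H(\la)\sim-\la\log\la$, so $\Phi''(\la)\sim\tfrac1{2\la}\big(\tfrac1{\alpha(r)}-1\big)<0$ because $\alpha(r)>1$ on $(0,1)$ --- but establishing it on the whole interval requires controlling $G''$, which after the M\"obius substitution $\la\mapsto p(\la)$ reduces to a one-variable estimate for $s\mapsto\log A_s$ minus the Fradelizi--Madiman--Wang loss term. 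Once $2\alpha(r)G\gr H$ is in hand, the chain above yields $N_r(X+Y)\gr1$ and hence \eqref{eq:main}. (The complementary claim that $\alpha(r)$ is optimal up to an absolute constant is separate, and is obtained by testing \eqref{eq:main} on one-dimensional exponential densities, for which the reverse Young step and the Fradelizi--Madiman--Wang step are both essentially tight.)
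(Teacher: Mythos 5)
Your reduction is essentially the paper's own: with $p(\la),q(\la)$ chosen so that $|r'|/|p'|=\la$ and $|r'|/|q'|=1-\la$, reverse Young plus the Fradelizi--Madiman--Wang comparison gives $N_r(X+Y)\gr e^{2G(\la)}N_r(X)^{\la}N_r(Y)^{1-\la}$, matching $\la$ to the normalized entropy powers reduces \eqref{eq:main} to the one-variable inequality $2\alpha(r)G(\la)\gr H(\la)$, and your value of $2G(\tfrac12)$ agrees with the paper's evaluation of $A(p,q,r)^{|r'|}$ at the symmetric point. The problem is that you do not prove this one-variable inequality, and the route you sketch for it --- concavity of $\Phi(\la)=2\alpha(r)G(\la)-H(\la)$ on $[0,\tfrac12]$ --- cannot work. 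By symmetry $\Phi'(\tfrac12)=0$, and since $\alpha(r)$ is defined precisely so that $\Phi(\tfrac12)=0$ while the desired conclusion is $\Phi\gr0$, the point $\la=\tfrac12$ must be a global \emph{minimum} of $\Phi$. If $\Phi$ were concave on $[0,\tfrac12]$, then $\Phi'$ would be nonincreasing with $\Phi'(\tfrac12)=0$, so $\Phi$ would be nondecreasing there and hence $\Phi\ls\Phi(\tfrac12)=0$ on $[0,\tfrac12]$ --- contradicting your own expansion $\Phi(\la)\sim(\alpha-1)(-\la\log\la)>0$ as $\la\to0^{+}$ (recall $\alpha(r)>1$). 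Concretely, from your definition of $G$ one computes $\Phi''(\la)=-\frac{\alpha-1}{\la(1-\la)}+\alpha\left(\frac{1}{\la+|r'|}+\frac{1}{1-\la+|r'|}\right)$, so $\Phi''(\tfrac12)=-4(\alpha-1)+\frac{4\alpha}{1+2|r'|}$, which is positive unless $\alpha(r)\gr\frac{1+r}{2r}$; numerically this fails throughout $(0,1)$ (e.g.\ $\alpha(0.1)\approx3.2<5.5$, $\alpha(0.9)\approx1.04<1.06$). Thus $\Phi$ is strictly convex near $\la=\tfrac12$ and concave only near the endpoints: it is neither concave nor convex on $[0,\tfrac12]$, and the endpoint asymptotics you cite as evidence are misleading.

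The missing ingredient --- and the way the paper closes exactly this gap --- is a monotonicity-of-ratios argument rather than a convexity argument. In the paper's notation (with $x=\la/|r'|$), the target inequality is rewritten as $L(x)/F(x)\ls 1-1/\alpha$, and Lemma \ref{lem:Calculus} (a Cauchy mean value theorem argument: $L''/F''$ increasing implies $L/F$ increasing, given the stated boundary and monotonicity conditions on $F$) shows $L/F$ is increasing on $\left(0,\frac{1}{2|r'|}\right]$, so its supremum sits at the symmetric point, where $\alpha(r)$ is by construction the value giving equality; the needed hypothesis that $L''/F''$ is monotone is an explicit elementary computation. You would need this lemma, or some genuine substitute establishing that $\la=\tfrac12$ maximizes $H/(2G)$, to complete your argument; as written, the central analytic step is absent and the proposed replacement is false.
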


Furthermore, and in contrast to some previous optimism (see, e.g., \cite{Li17}), these estimates are somewhat sharp for log-concave random vectors. Indeed, letting $\alpha_{opt}=\alpha_{opt}(r)$ denote the infimum over all $\alpha$ satisfying the inequality \eqref{eq:main} for log-concave random vectors, we have
\begin{equation} \label{eq:lower bounds on alpha optimal}
	\max \left\{ 1, \frac{(1-r) \log 2}{2 \log \Gamma(1+r) + 2 r \log \frac 1 r} \right \}  \leq \alpha_{opt} \leq \alpha(r),
\end{equation}
(see Proposition \ref{lower-bound} in Section \ref{sec:lower}). Unsurprisingly, the bounds \eqref{eq: alpha definition} and \eqref{eq:lower bounds on alpha optimal} imply that
\begin{equation} \lim_{r \to 1} \alpha(r) = \lim_{r \to 1} \alpha_{opt}(r) = 1, \end{equation}
recovering the usual EPI. In fact the ratio of the lower and upper bounds satisfies
\begin{equation} \label{eq: ratio of upper and lower bounds}
    \frac{2 \log \Gamma(1+r) + 2 r \log \frac 1 r }{(1+r) \log(1+r) + r \log \frac 1  {4r}} \to \frac 1 2
\end{equation}
with $r \to 0$ as can be seen by applying L'H\^opital's rule and the strict convexity of $\log \Gamma (1+r)$.  It can be verified numerically that the derivative of \eqref{eq: ratio of upper and lower bounds} is strictly positive on $(0,1)$.  Thus the $\alpha(r)$ derived  cannot be improved beyond a factor of $2$. 

More strikingly, as $r \to 0$ the bounds derived force both $\alpha_{opt}$ and $\alpha$ to be of the order
$(- r \log r)^{-1}$.
Thus, $\alpha_{opt}(r) \to + \infty$ for $r \to 0$, while $\alpha_{opt}(0) =1/2$ by the Brunn-Minkowski inequality. 
Nevertheless, in the case that the random vectors are uniformly distributed we do have better behavior.

\begin{theorem}\label{main2}
	Let $r \in (0,1)$. Let $X,Y$ be uniformly distributed random vectors on compact sets. Then,
	\begin{eqnarray} \label{eq: REPI for Uniformly distributed random vectors}
		N_r^{\beta}(X+Y) \geq N_r^{\beta}(X) + N_r^{\beta}(Y),
	\end{eqnarray}
	where
	\begin{eqnarray}
		\beta \triangleq \beta(r)
		=
		\frac{(1-r)\log 2 }{2 \log 2 + r \log r - (r+1) \log (r+1)}. \label{eq: definition of beta}
	\end{eqnarray}
\end{theorem}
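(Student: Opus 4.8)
The plan is to follow the same template as the proof of Theorem \ref{main}, but to exploit the extra rigidity coming from the fact that $X$ and $Y$ are uniform on compact sets. Write $X$ uniform on $A$ and $Y$ uniform on $B$, so that $f_X = \mathbbm{1}_A / \Vol(A)$ and similarly for $B$. Since the R\'enyi parameter $r \in (0,1)$, one should first relate $h_r(X+Y)$ to a quantity controlled by $N_0$ and $N_\infty$ of the summands. The key point is that, when $X$ is uniform on $A$, one has the explicit identities $N_0(X) = \Vol^{2/n}(A)$ and $N_\infty(X) = \Vol^{2/n}(A)$ as well; more usefully, for the convolution $f_{X+Y} = (\mathbbm{1}_A * \mathbbm{1}_B)/(\Vol(A)\Vol(B))$ one has $\supp(X+Y) = A+B$ and $\|f_{X+Y}\|_\infty \le \min\{1/\Vol(A), 1/\Vol(B)\}$. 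Thus $N_0(X+Y) = \Vol^{2/n}(A+B)$ and $N_\infty(X+Y) \ge \max\{\Vol^{2/n}(A), \Vol^{2/n}(B)\}$, and these can be fed into a comparison of $N_r$ with $N_0$ and $N_\infty$.

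The main engine will be the R\'enyi entropy comparison result of Fradelizi, Madiman, and Wang that is invoked for Theorem \ref{main}: for a log-concave random vector $Z$ (in particular a convolution of two uniforms is log-concave, being a convolution of log-concave densities), $h_r(Z)$ can be sandwiched between affine combinations of $h_0(Z)$ and $h_\infty(Z)$ with explicit, $r$-dependent coefficients. Applying the upper bound to $Z = X+Y$ and the lower bounds to $Z = X, Y$, and then combining with the Brunn–Minkowski inequality in the form \eqref{eq:Brunn Minkowski as an REPI}, $N_0^{1/2}(X+Y) \ge N_0^{1/2}(X) + N_0^{1/2}(Y)$, should yield an inequality of the shape $N_r^{\beta}(X+Y) \ge N_r^{\beta}(X) + N_r^{\beta}(Y)$. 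The exponent $\beta$ will come out as the ratio needed to convert the $N_r$-to-$N_0$ comparison constants into the $1/2$ exponent of Brunn–Minkowski: concretely, $\beta$ should equal $(1-r)\log 2$ divided by the ``entropy gap'' $h_0 - h_r$ computed on the extremal log-concave density, which for the uniform case is exactly $2\log 2 + r\log r - (r+1)\log(r+1)$ after normalizing per dimension. One then verifies by direct computation that this matches \eqref{eq: definition of beta}, and checks $\beta(r) \to 1$ as $r \to 1$ and $\beta(r) \to 1/2$ as $r \to 0$, consistent with recovering Brunn–Minkowski.

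Carrying this out, the steps in order are: (i) reduce to sets $A,B$ and record $\supp$ and $\|\cdot\|_\infty$ bounds for the convolution; (ii) quote the FMW comparison to bound $N_r(X+Y)$ below by a power of $N_0(X+Y) = \Vol^{2/n}(A+B)$ times a constant, and to bound $N_r(X), N_r(Y)$ above by the corresponding powers of $\Vol^{2/n}(A), \Vol^{2/n}(B)$; (iii) apply Brunn–Minkowski; (iv) homogenize and optimize the resulting inequality in the free scaling parameter (the ratio $\Vol(A)/\Vol(B)$), which is where the precise value of $\beta$ is forced; (v) simplify the constant to the closed form \eqref{eq: definition of beta}. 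The main obstacle I expect is step (iv)–(v): getting the optimization to collapse exactly to the stated $\beta(r)$ rather than a messier expression requires that the FMW comparison be tight precisely on the one-dimensional uniform (i.e.\ the supremum defining $\alpha_{opt}$ or $\beta_{opt}$ is attained in the degenerate "interval times interval" configuration), and one must track the constants carefully enough to see the $\log r$ and $(r+1)\log(r+1)$ terms appear with the right signs. A secondary technical point is checking that no loss is incurred in the $\|f_{X+Y}\|_\infty$ estimate — here it helps that for uniforms the bound $\|f_{X+Y}\|_\infty \le 1/\max\{\Vol(A),\Vol(B)\}$ is already the quantity that enters, so the $N_\infty$ side contributes nothing beyond what $N_0$ already gives, which is exactly why $\beta$ is smaller than $\alpha$.
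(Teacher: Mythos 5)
Your plan has two genuine gaps, and it also misses the observation that actually drives the paper's argument. First, Theorem \ref{main2} is stated for uniform distributions on \emph{arbitrary} compact sets, so neither $X$, $Y$ nor $X+Y$ need be log-concave: the density $\mathbbm{1}_A/\Vol(A)$ is log-concave only when $A$ is (essentially) convex, so your parenthetical claim that ``a convolution of two uniforms is log-concave, being a convolution of log-concave densities'' is false in the generality of the theorem, and the Fradelizi--Madiman--Wang comparison cannot be applied to $Z=X+Y$. Second, even if $A,B$ were convex, the comparison you need in step (ii) does not exist: Lemma \ref{reverse} only compares $N_p$ and $N_q$ for $0<p<q$, with a constant $p^{2/(p-1)}/q^{2/(q-1)}$ that degenerates as $p\to 0$, and indeed no bound of the form $N_r(Z)\geq c(r)\,N_0(Z)$ can hold for log-concave $Z$: for $f_c(x)=c e^{-cx}/(1-e^{-c})$ on $[0,1]$ one has $N_0=1$ while $N_r(f_c)\asymp c^{-2} r^{-2/(1-r)}\to 0$ as $c\to\infty$. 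So the reduction to Brunn--Minkowski collapses. Relatedly, your heuristic that the denominator of $\beta$ is the gap $h_0-h_r$ of an extremal density does not reproduce \eqref{eq: definition of beta}: for the triangle density (convolution of two unit intervals) the per-dimension gap is $\bigl(\log(1+r)-r\log 2\bigr)/(1-r)$, not $2\log 2+r\log r-(1+r)\log(1+r)$.

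The paper exploits the uniform hypothesis in exactly the opposite way: since $N_p(U)=\Vol^{2/n}(A)$ is \emph{independent of the parameter} $p$, one can plug $N_p(X)=N_r(X)$ and $N_q(Y)=N_r(Y)$ directly into the reverse sharp Young inequality \eqref{young-info}, so no comparison lemma is needed at all --- the removal of the two Lemma \ref{reverse} constants is precisely why the uniform exponent $\beta$ beats the log-concave exponent $\alpha$; your route reintroduces the very loss the uniform assumption lets you avoid. From there the proof is purely algebraic: raise to the power $\beta$, set $x=N_r^{\beta}(X)$, $y=N_r^{\beta}(Y)$, homogenize so that $x+y=1/|r'|$, choose admissible $(p,q)$ tied to $(x,y)$, and determine $\beta$ by analyzing the ratio $\log\bigl(\tfrac{(x+y)^{x+y}}{x^x y^y}\bigr)/\log C(p,q,r)$ via the calculus Lemma \ref{lem:Calculus}, with the extremal value taken at the symmetric point $x=y=\tfrac{1}{2|r'|}$, which evaluates to \eqref{eq: definition of beta}. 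Brunn--Minkowski is not an ingredient of the proof; it is only recovered as the $r\to 0$ limit of the statement.
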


Stated geometrically, Theorem \ref{main2} is the following generalization of the Brunn-Minkowski inequality.

\begin{theorem}\label{geom-equiv}
	Let $r \in (0,1)$. Let $A,B$ be compact sets in $\mathbb{R}^n$. Then, letting $X$ and $Y$ denote independent random vectors distributed uniformly on the respective sets $A$ and $B$,
	\begin{eqnarray}          e^{h_r(X+Y)} \geq \left(\Vol^{\gamma}(A) + \Vol^{\gamma}(B) \right)^{\frac 1 \gamma} 
	\end{eqnarray}
	where
	$\gamma \triangleq 2\beta/n.$
\end{theorem}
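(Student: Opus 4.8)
The plan is to obtain Theorem~\ref{geom-equiv} as a direct translation of Theorem~\ref{main2}; the only computation needed is the R\'enyi entropy power of a uniform random vector. Let $A\subset\R^n$ be compact with $\Vol(A)>0$ and let $X$ be distributed uniformly on $A$, so that $f_X=\Vol(A)^{-1}\mathbbm{1}_A$. Then for every $r\in(0,1)$,
\begin{equation*}
\int_{\R^n} f_X^r(x)\,dx=\Vol(A)^{-r}\,\Vol(A)=\Vol(A)^{1-r},
\end{equation*}
hence $h_r(X)=\frac{1}{1-r}\log\Vol(A)^{1-r}=\log\Vol(A)$, and therefore
\begin{equation*}
N_r(X)=e^{\frac2n h_r(X)}=\Vol^{\frac2n}(A),
\end{equation*}
\emph{independently of $r$}. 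The same holds for $Y$ uniform on $B$, with $\Vol(B)$ in place of $\Vol(A)$.

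Next I would feed these identities into inequality~\eqref{eq: REPI for Uniformly distributed random vectors}. Writing $\gamma=2\beta/n$, it gives
\begin{equation*}
N_r(X+Y)^{\beta}\;\ge\;N_r(X)^{\beta}+N_r(Y)^{\beta}=\Vol^{\frac{2\beta}{n}}(A)+\Vol^{\frac{2\beta}{n}}(B)=\Vol^{\gamma}(A)+\Vol^{\gamma}(B).
\end{equation*}
Since $N_r(X+Y)^{\beta}=e^{\frac{2\beta}{n} h_r(X+Y)}=e^{\gamma h_r(X+Y)}$, the last display reads $e^{\gamma h_r(X+Y)}\ge\Vol^{\gamma}(A)+\Vol^{\gamma}(B)$; raising both sides to the power $1/\gamma$ produces exactly the asserted inequality.

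The single point that needs checking is that raising to the power $1/\gamma$ preserves the inequality, i.e.\ that $\gamma>0$, equivalently $\beta(r)>0$ on $(0,1)$. This is elementary: in~\eqref{eq: definition of beta} the numerator $(1-r)\log2$ is positive, while the denominator $g(r)=2\log2+r\log r-(r+1)\log(r+1)$ satisfies $g(0)=2\log2>0$, $g(1)=0$, and $g'(r)=\log\frac{r}{r+1}<0$, so $g>0$ on $(0,1)$. One should also assume $\Vol(A),\Vol(B)>0$ so that $X,Y$ are genuine continuous random vectors; otherwise the statement is either vacuous or trivial. I do not expect a real obstacle here, since every step is reversible: Theorem~\ref{main2} is recovered from Theorem~\ref{geom-equiv} by the same substitution, which is precisely the claimed equivalence between the two results.
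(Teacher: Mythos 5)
Your proposal is correct and matches the paper's treatment: the paper derives Theorem \ref{geom-equiv} as an immediate restatement of Theorem \ref{main2}, using exactly the computation $N_r(U)=\Vol^{2/n}(A)$ for a uniform random vector (this identity appears at the start of the paper's proof of Theorem \ref{main2}), and then rewriting $N_r^{\beta}$ as $e^{\gamma h_r}$ with $\gamma=2\beta/n$. Your additional check that $\beta(r)>0$ is a fine (and valid) piece of due diligence, but otherwise there is nothing substantively different from the paper's route.
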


Theorems \ref{main2} and \ref{geom-equiv} can be understood as a family of R\'enyi-EPIs for uniform distributions interpolating between the Brunn-Minkowski inequality and EPI. Indeed $\lim_{r\to 0} \gamma = 1/n$, while $ e^{h_r(X+Y)}$ increases to  $\Vol(A+B)$, and we recover the Brunn-Minkowski inequality \eqref{eq: Brunn Minkowski}.  Observe, $\lim_{r \to 1} \beta =1$ gives the usual EPI in the special case that the random vectors are uniform distributions.   Note also that the exponent $\beta$ in \eqref{eq: definition of beta} is identical to the exponent obtained in \cite[Theorem 2.2]{Li17} for $r>1$.

We also approach the R\'enyi EPI of the form \eqref{eq:Bobkov Chistyakov REPI} and obtain the following result.

\begin{theorem} \label{thm: bobkov chistyakov Repi for small r}
    
    Let $r \in (0,1)$. For all independent log-concave random vectors $X_1, \dots, X_k$ in $\R^n$,
    \begin{eqnarray}\label{eq:k-tuples}
        N_r(X_1 + \cdots + X_k) \geq c(r,k) \sum_{i=1}^k N_r(X_i)
    \end{eqnarray}
    where 
    \begin{eqnarray}
        c(r,k) \geq r^{\frac 1 {1-r}} \left( 1 + \frac 1 {k |r'|} \right)^{1 + k |r'|}.
    \end{eqnarray}
\end{theorem}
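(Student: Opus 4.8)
The plan is to run, for $k$ summands at once, the same scheme that (presumably) underlies Theorem~\ref{main}: combine the sharp reverse Young inequality with the Fradelizi--Madiman--Wang R\'enyi comparison for log-concave vectors, keeping a free choice of H\"older exponents until the end and then optimizing. Write $f_i$ for the density of $X_i$, so that $f:=f_1*\cdots*f_k$ is the density of $S:=X_1+\cdots+X_k$; a convolution of log-concave functions is log-concave, hence $\int f^r<\infty$ and $\int f_i^p<\infty$ for all $p\in(0,1]$, so every quantity below is finite. Since $N_r(S)=\|f\|_r^{2r/(n(1-r))}$ with a positive exponent, it suffices to lower bound $\|f\|_r$.

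First I would apply the sharp reverse Young (Brascamp--Lieb/Barthe) inequality: for any $p_1,\dots,p_k\in(0,1]$ with $\sum_i p_i^{-1}=(k-1)+r^{-1}$, one has $\|f\|_r\ge\big(\prod_i A_{p_i}/A_r\big)^{n}\prod_i\|f_i\|_{p_i}$, where $A_p=(p^{1/p}/|p'|^{1/p'})^{1/2}$ is the Beckner constant. Next, since each $X_i$ is log-concave, the FMW comparison lets me trade $\|f_i\|_{p_i}$ for $N_r(X_i)$: for $p_i\ge r$ one has $N_r(X_i)\le C(r,p_i)N_{p_i}(X_i)$, with the dimension-free constant $C(r,p_i)=r^{-2/(1-r)}p_i^{2/(1-p_i)}$ equal to its value on a one-sided exponential, i.e.\ $\|f_i\|_{p_i}\ge(C(r,p_i)^{-1}N_r(X_i))^{n(1-p_i)/(2p_i)}$. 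Substituting, raising to the power $\tfrac{2r}{n(1-r)}$, and setting $q:=|r'|=\tfrac r{1-r}$ and $s_i:=\tfrac{1-p_i}{p_i}\in[0,q^{-1}]$ (so the exponent constraint becomes $\sum_i s_i=q^{-1}$ and the geometric-mean weights $qs_i$ sum to $1$), all the Beckner and FMW factors collapse and one is left with
\[
N_r(S)\ \ge\ \frac{q^{q}}{(q+1)^{q+1}}\ \prod_{i=1}^k\frac{(1+s_i)^{q(1+s_i)}}{s_i^{q s_i}}\ \prod_{i=1}^k N_r(X_i)^{q s_i},
\]
valid for every admissible $(s_1,\dots,s_k)$.

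It then remains to optimize over $(s_i)$ subject to $\sum_i s_i=q^{-1}$. Taking logarithms this is a concave maximization; dualizing the constraint and using the scalar identity $\sup_{s\ge0}[(1+s)\log(1+s)-s\log s+s\log a-\lambda s]=\lambda-\log(e^{\lambda}-a)$ reduces it to the one-variable quantity $\inf_{L>\max_i N_r(X_i)}[(k+q^{-1})\log L-\sum_i\log(L-N_r(X_i))]$. Using the scale-homogeneity of the bound to normalize $\sum_i N_r(X_i)=1$, and noting that when one $N_r(X_i)$ dominates one already has $N_r(S)\ge N_r(X_i)\ge c(r,k)\sum_j N_r(X_j)$, one checks that the worst configuration is the symmetric one $N_r(X_1)=\dots=N_r(X_k)$; there $s_i\equiv\tfrac1{kq}$, the optimal $L$ equals $(kq+1)N_r(X_1)$, and the bound becomes exactly $c(r,k)\sum_i N_r(X_i)$ with $c(r,k)=\tfrac{q^{q+1}}{(q+1)^{q+1}}(1+\tfrac1{kq})^{1+kq}=r^{1/(1-r)}(1+\tfrac1{k|r'|})^{1+k|r'|}$ (for $k=2$ this recovers the consequence of Theorem~\ref{main}). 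The main obstacle is precisely this extremality claim: one must show that over all admissible $(s_i)$ and all configurations $\{N_r(X_i)\}$ the clean bound never drops below $c(r,k)\sum_i N_r(X_i)$, which needs both the interior Lagrange analysis and a careful treatment of the boundary exponents $p_i\in\{1,r\}$ (where the FMW constant degenerates). Everything before that (the reverse-Young/FMW substitution and the algebra collapsing the constants) is routine.
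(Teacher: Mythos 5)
Your first half is sound and is in substance identical to the paper's Theorem \ref{thm: General Young + log-conncave comparison}: after the change of variables $q=|r'|$, $s_i=1/|p_i'|$, $t_i=qs_i$, your displayed bound is exactly $N_r(S)\geq A^{|r'|}\prod_i N_r^{t_i}(X_i)$ with the combined Beckner and Fradelizi--Madiman--Wang constant, so the reverse-Young step, the comparison step, and the collapse of the constants all check out. The gap is in the final step, and you flag it yourself: the claim that, after taking the supremum over admissible exponents, the worst configuration of $\{N_r(X_i)\}$ is the symmetric one is asserted but never proved. Your proposed route (Lagrange duality via the scalar identity, an extremality analysis over all configurations, and a treatment of the degenerate endpoints $p_i\in\{1,r\}$ where the comparison constant blows up) is precisely the missing content; as written, the claimed constant is only verified at the symmetric configuration, while for a general configuration you are left with the variational quantity $\inf_L\bigl[(k+q^{-1})\log L-\sum_i\log(L-N_r(X_i))\bigr]$ that is never compared to $c(r,k)\sum_i N_r(X_i)$. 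So the proposal, as it stands, does not constitute a proof.

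The paper closes this step without any min--max analysis, and you could too: since your bound holds for \emph{every} admissible choice of exponents, normalize $\sum_i N_r(X_i)=1$ and simply pick $t_i=qs_i=N_r(X_i)$. With this choice the cross term $\sum_i t_i\log\frac{N_r(X_i)}{t_i}$ (a negative relative entropy) vanishes identically, and what remains is
\begin{equation*}
\log N_r(X_1+\cdots+X_k)\ \geq\ \frac{|r'|\log r}{r}+\sum_{i=1}^k |r'|\Bigl(1+\frac{t_i}{|r'|}\Bigr)\log\Bigl(1+\frac{t_i}{|r'|}\Bigr),
\end{equation*}
which depends on the configuration only through $t$ in the simplex. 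By convexity of $u\mapsto u\log u$ (take the tangent line at $1+\frac{1}{k|r'|}$ and sum, using $\sum_i t_i=1$) the right-hand side is at least $\frac{|r'|\log r}{r}+(k|r'|+1)\log\bigl(1+\frac{1}{k|r'|}\bigr)$, i.e.\ $N_r(X_1+\cdots+X_k)\geq r^{\frac{1}{1-r}}\bigl(1+\frac{1}{k|r'|}\bigr)^{1+k|r'|}$, with no interior Lagrange analysis, no boundary cases, and no identification of a worst configuration. In short: your strategy is the right one and your intermediate inequality is exactly the paper's, but the extremality claim you defer is the whole remaining difficulty, and it is sidestepped, rather than solved, by the specific exponent choice above.
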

This bound is shown to be tight up to absolute constants as well. Indeed, we will see in Proposition \ref{sharp-k-tuples} in Section \ref{sec:k-tuples} that the largest constant $c_{opt}(r)$ satisfying
    \begin{eqnarray}
        N_r(X_1 + \cdots + X_k) \geq c_{opt}(r) \sum_{i=1}^k N_r(X_i)
    \end{eqnarray}
    for any $k$-tuples of independent log-concave random vectors satisfies
    \begin{eqnarray}
       e r^{\frac 1 {1-r}} \leq c_{opt}(r) \leq \pi r^{\frac 1 {1-r}}.
    \end{eqnarray}

\section{Preliminaries}

For $p \in [0,\infty]$, we denote by $p'$ the conjugate of $p$,
\begin{equation}
	\frac 1 {p} + \frac 1 {p'} = 1.
\end{equation}
For a non-negative function $f \colon \mathbb{R}^n \to [0, +\infty)$ we introduce the notation
\begin{equation}
	\|f\|_p = \left( \int_{\mathbb{R}^n} f^p(x) dx \right)^{1/p}.
\end{equation}

\begin{definition}\label{log-concave}
	A random vector $X$ in $\R^n$ is log-concave if it possesses a log-concave density $f_X \colon \R^n \to [0,+\infty)$ with respect to Lebesgue measure.  In other words, for all $\lambda \in (0,1)$ and $x,y \in \mathbb{R}^n$,
	\begin{equation}
		f_X((1-\la)x + \la y) \geq f_X^{1-\la}(x)f_X^{\la}(y).
	\end{equation}
	Equivalently $f_X$ can be written in the form $e^{-V}$, where $V$ is a proper convex function.
\end{definition}

Log-concave random vectors and functions are important classes in many disciplines. In the context of information theory, several nice properties involving entropy of log-concave random vectors were recently established (see, e.g., \cite{BN12, BM11:it, CFP16, MK17, Tos15:2, Tos15:1}). Significant examples are Gaussian and exponential distributions as well as any uniform distribution on a convex set.  

The main tool in establishing Theorems \ref{main}, \ref{main2} and \ref{thm: bobkov chistyakov Repi for small r} is the reverse form of the sharp Young inequality. The reversal of Young's inequality for parameters in $[0,1]$ is due to Leindler \cite{Lei72b}, while sharp constants were obtained independently by Beckner, and Brascamp and Lieb:

\begin{theorem}[\cite{Bec75,BL76b}]

Let $0 \leq p,q,r \leq 1$ such that $\frac{1}{p'} + \frac{1}{q'} = \frac{1}{r'}$. Then,
\begin{eqnarray}\label{young}
\|f \star g\|_r \geq C^{\frac{n}{2}} \|f\|_p \|g\|_q,
\end{eqnarray}
where
\begin{equation} \label{eq: c_m defintion} C = C(p,q,r) = \frac{c_p c_q}{c_r}, \qquad c_m = \frac{m^{1/m}}{|m'|^{1/m'}}. \end{equation}

\end{theorem}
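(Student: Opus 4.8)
The plan is to establish the sharp reverse Young inequality via the two classical reductions --- first to one dimension, then to Gaussian extremizers --- followed by an explicit Gaussian computation that produces the constant $C = c_p c_q / c_r$. Throughout I would exploit homogeneity: since both sides scale the same way under $f \mapsto \lambda f$ and $g \mapsto \mu g$, I may normalize $\|f\|_p = \|g\|_q = 1$ and aim to prove $\|f \star g\|_r \geq C^{n/2}$. Note that for $p,q,r \in (0,1)$ the conjugate indices $p',q',r'$ are all negative, so this is genuinely a \emph{reverse} inequality; in particular the Brascamp--Lieb--Luttinger rearrangement inequality, which \emph{increases} convolutions under symmetric decreasing rearrangement, points the wrong way and cannot be used directly to lower bound $\|f \star g\|_r$. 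This sign reversal is the source of all the difficulty.

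First I would reduce to $n=1$. The constant $C$ is dimension-free, and for functions that split as tensor products $f = \bigotimes_i f_i$ and $g = \bigotimes_i g_i$ along orthogonal coordinate axes, both $\|f \star g\|_r$ and $\|f\|_p \|g\|_q$ factorize (since $(f_1 \otimes f_2) \star (g_1 \otimes g_2) = (f_1 \star g_1) \otimes (f_2 \star g_2)$ and $\|F \otimes G\|_m = \|F\|_m \|G\|_m$), so the $n$-dimensional inequality for such product functions follows from $n$ applications of the one-dimensional case, each carrying the same indices $p,q,r$. Upgrading this to arbitrary $f,g$ is the delicate point: one must show that the extremal configuration is attained on Gaussians, which are products, so that no loss is incurred. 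I would carry out this upgrade following Brascamp and Lieb via a tensor-power / central-limit argument --- passing to high powers $f^{\otimes N}$, $g^{\otimes N}$, rescaling, and letting $N \to \infty$ so that the rescaled self-convolutions converge to Gaussian profiles, thereby forcing the optimal ratio in any dimension to be governed by Gaussians.

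Having reduced to Gaussians, the remaining step is a direct computation. Taking centered Gaussian densities $f(x) = e^{-a x^2}$ and $g(x) = e^{-b x^2}$, the convolution $f \star g$ is again Gaussian, and $\|e^{-s x^2}\|_m = (\pi/(m s))^{1/(2m)}$ for each relevant index. Substituting and optimizing over the widths $a,b$ subject to the constraint $\tfrac{1}{p'} + \tfrac{1}{q'} = \tfrac{1}{r'}$ collapses the ratio to exactly $c_p c_q / c_r$ with $c_m = m^{1/m}/|m'|^{1/m'}$; this is where the precise constant, including the negative conjugate indices inside $|m'|^{1/m'}$, is pinned down. I expect the main obstacle to be the reduction-to-Gaussians step rather than this computation: in the reverse regime one cannot appeal to rearrangement, so extremality of Gaussians must be forced either by the tensor-power limiting argument above or, as an alternative I would keep in reserve, by Barthe's mass-transport method, in which one transports $f^p$ and $g^q$ onto a common reference by monotone (Brenier) maps and applies the Minkowski determinant inequality pointwise --- an approach that proves the sharp reverse inequality directly in all dimensions and sidesteps the one-dimensional reduction entirely. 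Finally I would recover the degenerate endpoints $p,q,r \in \{0,1\}$ by continuity, matching the $N_0$ and $N_\infty$ limits recorded in the introduction.
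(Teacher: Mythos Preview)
The paper does not prove this theorem at all: it is quoted as a known result from the literature (Beckner; Brascamp--Lieb), stated in the Preliminaries section with citations and then used as a black box via its information-theoretic reformulation \eqref{young-info}. So there is no ``paper's own proof'' to compare against.

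Your outline is a reasonable high-level sketch of how the original proofs proceed --- tensorization to reduce to dimension one, reduction to Gaussian extremizers (via the Brascamp--Lieb tensor-power/CLT mechanism or, alternatively, Barthe's transport argument), followed by the explicit Gaussian computation yielding $c_pc_q/c_r$. That is the correct architecture, and your remark that rearrangement goes the wrong way in the reverse regime is on point. The main caveat is that the tensor-power step, as you acknowledge, is genuinely nontrivial and your sketch does not supply the limiting argument in detail; if you were asked to write this out in full, Barthe's mass-transport proof is the cleanest self-contained route for the reverse inequality. But for the purposes of this paper none of that is needed: the authors simply invoke the result.
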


Let us recall the information-theoretic interpretation of Young's inequality.  Given independent random vectors $X$ with density $f$ and $Y$ with density $g$, the random vector $X+Y$ will be distributed according to $f \star g$. Observe that the  $L_p$ ``norms''  have the following expression as R\'enyi entropy powers, \begin{equation} \label{eq: Lp norms to Renyi conversion}
\|f\|_r = N_r(X)^{-\frac{n}{2 r'}} = N_r(X)^{\frac{n}{2 |r'|}}.
\end{equation}
Hence, we can rewrite \eqref{young} as follows,
\begin{eqnarray}\label{young-info}
N_r(X+Y)^{\frac{1}{|r'|}} \geq C N_p(X)^{\frac{1}{|p'|}} N_q(Y)^{\frac{1}{|q'|}}.
\end{eqnarray}
This is an information-theoretic interpretation of the sharp Young inequality, which was developed in \cite{DCT91}.

We also need a R\'enyi comparison result for log-concave random vectors that the authors first learned from private communication \cite{MokshayPrivateCom}.  Though the result is implicit in \cite{FMW16}, we give a derivation in the appendix for the convenience of the reader.  A generalization of the result to $s$-concave random variables (see \cite{bobkov2016hyperbolic,Borell74}) is planned to be included in a revised version of \cite{FLM15}.

\begin{lemma}[Fradelizi-Madiman-Wang \cite{FMW16,MokshayPrivateCom}]\label{reverse}

Let $0<p<q$. Then, for every log-concave random vector $X$,
\begin{equation} N_q(X) \leq N_p(X) \leq \frac{p^{\frac{2}{p-1}}}{q^{\frac{2}{q-1}}} N_q(X). \end{equation}

\end{lemma}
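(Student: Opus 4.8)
I would handle the two inequalities separately. The first, $N_q(X)\le N_p(X)$, requires no log-concavity at all: it is just the monotonicity of R\'enyi entropy in its order. Writing $s=1-r$, one has $h_r(X)=\log\bigl\|1/f_X(X)\bigr\|_{L^s(\PP_X)}$ for $r<1$ and $h_r(X)=-\log\bigl\|f_X(X)\bigr\|_{L^{r-1}(\PP_X)}$ for $r>1$, and since $L^t$-norms against the probability measure $\PP_X$ increase in $t$, $h_r(X)$ is non-increasing in $r$; hence so is $N_r(X)=e^{\frac 2n h_r(X)}$. All the content is in the second inequality, which, after rewriting $L_r$-norms as R\'enyi entropy powers via \eqref{eq: Lp norms to Renyi conversion}, is exactly the assertion that
\[
    r\longmapsto N_r(X)\,r^{\frac{2}{1-r}}\qquad\text{is non-decreasing on }(0,\infty).
\]
At $r=\infty$ this quantity equals $N_\infty(X)$, so the endpoint case $p=1$, $q=\infty$ recovers the familiar bound $N(X)\le e^{2}N_\infty(X)$ for log-concave $X$.

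To prove this monotonicity I would first observe that the claimed inequality is invariant under $X\mapsto\mu X$ (which multiplies every $N_r(X)$ by $\mu^2$), so one may normalize $\|f_X\|_\infty=1$ and write $f_X=e^{-W}$ with $W\ge 0$ convex, $\min W=0$. Setting $\psi(v)=\Vol(\{W\le v\})$, the Brunn--Minkowski inequality \eqref{eq: Brunn Minkowski} makes $\psi^{1/n}$ concave on $[0,\infty)$. A layer-cake computation then yields, for every $r>0$,
\[
    \int_{\R^n}f_X^{\,r}\,dx=\int_0^\infty\psi(v)\,r\,e^{-rv}\,dv,
\]
and the normalization $\int f_X=1$ becomes $\int_0^\infty\psi(v)e^{-v}\,dv=1$. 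Substituting into the definition of $N_r(X)$ gives $\log\!\bigl(N_r(X)\,r^{2/(1-r)}\bigr)=\tfrac2n\cdot\frac{h(r)}{1-r}$, where
\[
    h(r)=(n+1)\log r+\log\!\int_0^\infty\psi(v)\,e^{-rv}\,dv .
\]
Since $h(1)=0$, it now suffices to prove that $h$ is concave on $(0,\infty)$: the secant slope $r\mapsto h(r)/(r-1)$ is then non-increasing, so $r\mapsto h(r)/(1-r)$ is non-decreasing, which is precisely what we want. (The exponential distribution, for which $\psi(v)\propto v^{n}$ and $h\equiv 0$, saturates every step, which accounts for the sharpness of the constant.)

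Finally, $h''(r)=-\frac{n+1}{r^{2}}+\bigl(\log\!\int_0^\infty\psi(v)e^{-rv}dv\bigr)''=-\frac{n+1}{r^{2}}+\mathrm{Var}_{\nu_r}(V)$, where $\nu_r$ is the probability measure on $(0,\infty)$ with density proportional to $\psi(v)e^{-rv}$ and $V\sim\nu_r$; since $\psi^{1/n}$ is concave and positive, $\log\psi$ is concave, so $\nu_r$ is a log-concave measure on the half-line. Thus the concavity of $h$ is equivalent to the variance estimate
\[
    \mathrm{Var}_{\nu_r}(V)\le\frac{n+1}{r^{2}},
\]
with equality exactly when $\nu_r$ is the Gamma law of shape $n+1$ (that is, when $X$ is a product of exponentials). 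I expect this estimate — essentially the substance of the Fradelizi--Madiman--Wang comparison — to be the main obstacle. The natural route is an extremality argument: among log-concave half-line densities of the form $\mathrm{const}\cdot\psi(v)e^{-rv}$ with $\psi^{1/n}$ concave, the Gamma law maximizes the (suitably rescaled) variance, which one can attempt either through a Brunn--Minkowski-type rearrangement of the profile $\psi$ toward the power $v^{n}$ of the same normalization, or through Ball's bodies $K_p(f_X)$ and the inclusion and volume relations among them. Alternatively, if one does not wish to reprove it, the estimate can simply be quoted from \cite{FMW16}.
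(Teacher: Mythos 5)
Your reduction is correct, and once unwound it is essentially the paper's own proof. The key point you seem not to notice is that your function $h$ is not just ``essentially the substance of'' the Fradelizi--Madiman--Wang comparison: it \emph{is} their statement. Your layer-cake identity gives $\int f^t\,dx = t\int_0^\infty\psi(v)e^{-tv}\,dv$, hence $h(t)=(n+1)\log t+\log\int_0^\infty\psi(v)e^{-tv}\,dv=\log\bigl(t^n\int f^t\bigr)=\log\varphi(t)$, so ``$h$ is concave'' is exactly the log-concavity of $\varphi(t)=t^n\int f^t$ quoted in the paper's appendix (\cite[Theorem 2.9]{FMW16}), and your secant-slope argument through $(1,h(1))=(1,0)$ is the same algebra as the paper's step of writing $q$ as a convex combination of $p$ and $1$ and using $\varphi(1)=1$. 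Thus, on the branch where you ``simply quote \cite{FMW16}'', your proof coincides with the paper's, with an extra (correct but unnecessary for the reduction) detour through the level-set profile $\psi$ and Brunn--Minkowski; the normalization $\|f\|_\infty=1$ and the first inequality via monotonicity of $L^s$-norms also match the paper's remarks. The only genuine divergence is your suggestion to prove the key concavity from scratch via the variance bound $\mathrm{Var}_{\nu_r}(V)\le (n+1)/r^{2}$. That bound is true, being equivalent to $(\log\varphi)''\le 0$, but you only sketch two possible attacks (rearrangement of $\psi$ toward $v^n$, or Ball's bodies) and carry out neither; the known proof goes instead through the convexity of the perspective function $(x,t)\mapsto tV(x/t)$ combined with Pr\'ekopa--Leindler, and it is not clear your sketched routes close easily, so as written that branch is an unproven step rather than an alternative proof. (A minor point: equality in your variance bound holds for all cone-type densities proportional to $e^{-\|x\|_K}$ with $K$ convex, not only for products of exponentials.)
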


The first inequality is classical and holds for general $X$, and follows from the expression $N_p(X)^{n/2} = \left(\mathbb{E}f^{p-1}(X)\right)^{-1/(p-1)}$.  Indeed, the increasingness of the function $s \mapsto (\mathbb{E} Y^s)^{1/s}$ for a positive random variable $Y$ and $s \in (-\infty,\infty)$, which follows from Jensen's inequality, implies the decreasingness of R\'enyi entropy powers.  The content of Fradelizi, Madiman, and Wang's result is thus the second inequality, that this decrease is not too rapid for log-concave random vectors.

We will also have use for a somewhat technical but elementary Calculus result.

\begin{lemma} \label{lem:Calculus}
	Let $c>0$. Let $L,F \colon [0, c] \to [0,\infty)$ be twice differentiable on $(0,c]$, continuous on $[0,c]$, such that $L(0) = F(0)= 0$ and $L'(c)=F'(c) =0$.  Let us also assume that $F(x)>0$ for $x>0$, that $F$ is strictly increasing, and that $F'$ is strictly decreasing. Then 
	$\frac{L''}{F''}$ increasing on $(0,c)$ implies that $\frac{L}{F}$ is increasing on $(0,c)$ as well. In particular,
	\begin{equation}
		\max_{x \in [0,c]} \frac{L(x)}{F(x)} = \frac{L(c)}{F(c)}.
	\end{equation}
\end{lemma}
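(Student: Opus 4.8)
The plan is to deduce the statement from two successive applications of the monotone form of L'H\^opital's rule, the first ``anchored'' at the right endpoint $c$ and the second at the left endpoint $0$.

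First I would unpack the hypotheses. Since $F$ is strictly increasing with $F(0)=0$, we have $F(x)>0$ on $(0,c]$, in particular $F(c)>0$; since $F'$ is strictly decreasing with $F'(c)=0$, we have $F'(x)>0$ on $(0,c)$; and since $L''/F''$ is assumed to be defined on $(0,c)$, necessarily $F''$ never vanishes there, so $F''<0$ on $(0,c)$. Thus $F$, $F'$, and $-F''$ are all positive on $(0,c)$.

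The engine is the following elementary fact: \emph{if $\varphi,\psi$ are differentiable on an open interval $I$, $\psi'$ is nowhere zero on $I$, $\varphi$ and $\psi$ both tend to $0$ at one common endpoint of $I$, and $\varphi'/\psi'$ is increasing on $I$, then $\varphi/\psi$ is increasing on $I$.} Only first derivatives are used, so this applies under the stated regularity. I would prove it with Cauchy's mean value theorem: $\psi'$ has constant sign (Darboux), hence $\psi$ is strictly monotone and nonzero on $I$; for $x_1<x_2$ in $I$, Cauchy's MVT on $[x_1,x_2]$ gives $\frac{\varphi(x_2)-\varphi(x_1)}{\psi(x_2)-\psi(x_1)}=\frac{\varphi'(\eta)}{\psi'(\eta)}$ for some $\eta\in(x_1,x_2)$, while a second application on an interval shrinking toward the endpoint where $\varphi,\psi$ vanish compares the relevant $\frac{\varphi(x_j)}{\psi(x_j)}$ with the values of $\varphi'/\psi'$ at points lying on one side of $\eta$; since $\varphi'/\psi'$ is increasing, combining the two identities together with the known sign of $\psi$ and of $\psi(x_2)-\psi(x_1)$ yields $\frac{\varphi(x_1)}{\psi(x_1)}\le\frac{\varphi(x_2)}{\psi(x_2)}$. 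Now apply this twice. \textbf{Step 1:} take $(\varphi,\psi)=(L',F')$ on $I=(0,c)$, with the common zero at the right endpoint $c$ (using $L'(c)=F'(c)=0$); here $\varphi'/\psi'=L''/F''$ is increasing by hypothesis, so $L'/F'$ is increasing on $(0,c)$. \textbf{Step 2:} take $(\varphi,\psi)=(L,F)$ on $I=(0,c)$, with the common zero at the left endpoint $0$ (using $L(0)=F(0)=0$); here $\varphi'/\psi'=L'/F'$ is increasing by Step 1, so $L/F$ is increasing on $(0,c)$. Since $L,F\ge 0$ and $F(c)>0$, the supremum of $L/F$ on $(0,c]$ equals $L(c)/F(c)$; reading $L(0)/F(0)$ as the finite nonnegative monotone limit $\lim_{x\to0^+}L(x)/F(x)$ then gives $\max_{x\in[0,c]}L(x)/F(x)=L(c)/F(c)$.

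The one genuine subtlety is Step 1. One cannot simply differentiate $L'/F'$ and check the sign of its numerator, because that numerator involves $L'''$ and $F'''$, which the hypotheses do not constrain — hence the detour through the mean-value form of the monotone L'H\^opital rule, which needs only first derivatives. Moreover, in Step 1 the anchoring endpoint is $c$ and $\psi=F'>0$ while $\psi'=F''<0$, so the orientation in the final sign bookkeeping is the reverse of the textbook left-endpoint case; keeping those signs straight is the only place where care is required.
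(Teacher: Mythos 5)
Your proposal is correct and follows essentially the same route as the paper: both arguments are exactly the Cauchy mean value theorem version of the monotone L'H\^opital rule applied twice, first to $(L',F')$ anchored at the common zero $L'(c)=F'(c)=0$ and then to $(L,F)$ anchored at $L(0)=F(0)=0$, with the same sign bookkeeping using $F'>0$ decreasing and $F>0$ increasing. The only difference is presentational: you isolate the rule as a general lemma, while the paper writes out the two Cauchy MVT computations inline.
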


The proof is an exercise in  Cauchy's mean value theorem.

\begin{proof}
For $0< u<v< c$, by Cauchy's mean value theorem
	\begin{eqnarray}
	    \frac{L'({c}) - L'(v)}{F'({c}) - F'(v)} &=& \frac{L''(c_1)}{F''(c_1)}, \label{eq:Cauchy MVT 1}
	    \\
		\frac{L'(v) - L'(u)}{F'(v) - F'(u)} &=& \frac{L''(c_0)}{F''(c_0)}, \label{eq:Cauchy MVT 2}
	\end{eqnarray} 
	for some $c_0 \in (u,v)$ and $c_1 \in (v, c)$. 
	Thus,
	\begin{eqnarray}
        \frac{ L'(v) }{ F'(v) }  
            &=& 
                \frac{ L'(c) - L'(v) }{F'(c) - F'(v) } \label{1} 
                    \\
            &= &
                \frac{ L''(c_1) }{ F''(c_1) } \label{2} 
                    \\
            &\geq &
                \frac{ L''(c_0) }{ F''(c_0) } \label{3} 
                    \\
            &= &
                \frac{ L'(v) - L'(u) }{ F'(v) - F'(u) } \label{4}
    \end{eqnarray}
where \eqref{1} holds by the assumption that $L'(c) = F'(c) = 0$, \eqref{2} and \eqref{4} follow from \eqref{eq:Cauchy MVT 1} and \eqref{eq:Cauchy MVT 2} respectively, and \eqref{3} holds by the assumption that $\frac{L''}{F''}$ is monotonically increasing in $(0,c)$. The inequality
	\begin{equation}
	\frac{L'(v)}{F'(v)} \geq \frac{L'(v) - L'(u)}{F'(v) - F'(u)},
	\end{equation}
	is equivalent to
	\begin{equation}
	-L'(v)F'(u) \leq -L'(u) F'(v),
	\end{equation}
	because $F'$ is non-negative and strictly decreasing on $(0,c)$.  Thus $L'(v)/F'(v) \geq L'(u)/F'(u)$ since $F' \geq 0$. That is, $L'/F'$ is non-decreasing on $(0,c)$.  Now we can apply a similar argument to show that $L/F$ is non-decreasing. Again Cauchy's mean value theorem, for $0 < u < v < c$ we have
	\begin{eqnarray}
		\frac{L(u) - L(0)}{F(u) - F(0)} &=& \frac{L'(c_0)}{F'(c_0)},
		\\
		\frac{L(v) - L(u)}{F(v) - F(u)} &=& \frac{L'(c_1)}{F'(c_1)},
	\end{eqnarray} 
	for some $c_0 \in (0,u)$ and $c_1 \in (u, v)$. Thus by the proven non-decreasingness of $\frac{L'}{F'}$ and the fact that $F(0) = L(0) = 0$ the above implies
	\begin{equation}
	\frac{L(v) - L(u)}{F(v) - F(u)} \geq  \frac{L(u)}{F(u)}.
	\end{equation}
	Since $F$ is non-negative and strictly increasing on $(0,c)$, we have
	\begin{equation}
	L(v) F(u) \geq L(u) F(v).
	\end{equation}
	Thus it follows that $L/F$ is indeed non-decreasing.
\end{proof}

\section{Proof of Theorem \ref{main}}\label{th1}

We first combine Lemma \ref{reverse} and the information-theoretic formulation of reverse Young's inequality \eqref{young-info}. Observe that for $p,q,r \in (0,1)$ satisfying the equation $\frac 1 {p'} + \frac 1 {q'} = \frac 1 {r'}$ forces $p,q > r$. Thus, our invocation of Lemma \ref{reverse} is necessarily at the expense of the two constants below,
\begin{eqnarray}\label{eq:1}
N_r(X+Y)^{\frac{1}{|r'|}} \geq C \left( \frac{p^{\frac{2}{p-1}}}{r^{\frac{2}{r-1}}} \right)^{\frac{1}{|p'|}}  \left( \frac{q^{\frac{2}{q-1}}}{r^{\frac{2}{r-1}}} \right)^{\frac{1}{|q'|}} N_r(X)^{\frac{1}{|p'|}} N_r(Y)^{\frac{1}{|q'|}}.
\end{eqnarray}
Since
\begin{equation} \frac{1}{|p'|} = -\frac{1}{p'} = \frac{1}{p} - 1 = \frac{1-p}{p}, \end{equation}

we deduce that
\begin{equation} \frac{1}{|p'|(p-1)} = -\frac{1}{p}. \end{equation}
Also, we have
\begin{equation} \frac{1}{|p'|} + \frac{1}{|q'|} = \frac{1}{|r'|}. \end{equation}
Hence, we can rewrite \eqref{eq:1} as,
\begin{eqnarray}
	N_r(X+Y)^{\frac{1}{|r'|}} \geq C \,\, \frac{p^{-\frac{2}{p}} q^{-\frac{2}{q}}}{r^{-\frac{2}{r}}} N_r(X)^{\frac{1}{|p'|}} N_r(Y)^{\frac{1}{|q'|}} = A(p,q,r) N_r(X)^{\frac{1}{|p'|}} N_r(Y)^{\frac{1}{|q'|}},
\end{eqnarray}
where
\begin{equation} \label{eq: defintion of constant A} A(p,q,r) = \frac{c_p c_q}{c_r} \frac{r^{\frac{2}{r}}}{p^{\frac{2}{p}} q^{\frac{2}{q}}}. \end{equation}
Equivalently,
\begin{eqnarray} \label{eq:Multiplicative EPI}
N_r(X+Y) \geq A(p,q,r)^{|r'|} N_r(X)^{\frac{|r'|}{|p'|}} N_r(Y)^{\frac{|r'|}{|q'|}}.
\end{eqnarray}

We collect these arguments to state the following result, actually stronger than Theorem \ref{main}.

\begin{theorem}
    Let $r \in (0,1)$. Let $X,Y$ be independent log-concave vectors in $\mathbb{R}^n$. For $0<p,q<1$ satisfying $\frac 1 {p'} + \frac 1 {q'} = \frac 1 {r'}$, we have
    \begin{equation}
        N_r(X+Y) \geq A(p,q,r)^{|r'|} N_r(X)^{\frac{|r'|}{|p'|}} N_r(Y)^{\frac{|r'|}{|q'|}}
    \end{equation}
    with $A(p,q,r)$ as defined in \eqref{eq: defintion of constant A}.
\end{theorem}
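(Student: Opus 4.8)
The plan is to simply assemble the ingredients already in place, since the statement is a repackaging of the information-theoretic reverse Young inequality \eqref{young-info} together with the Fradelizi--Madiman--Wang comparison (Lemma \ref{reverse}). The starting point is \eqref{young-info}: for $p,q,r\in(0,1)$ with $\frac1{p'}+\frac1{q'}=\frac1{r'}$ and independent $X,Y$,
\[
N_r(X+Y)^{1/|r'|}\geq \frac{c_p c_q}{c_r}\, N_p(X)^{1/|p'|}\,N_q(Y)^{1/|q'|}.
\]
The right-hand side involves $N_p$ and $N_q$ rather than $N_r$, so the first thing I would record is the elementary observation that the constraint $\frac1{p'}+\frac1{q'}=\frac1{r'}$ rewrites as $\frac1p+\frac1q=1+\frac1r$, which together with $p,q,r\in(0,1)$ forces $p>r$ and $q>r$. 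This is exactly the regime in which Lemma \ref{reverse} applies: using $N_p(X)\geq \frac{p^{2/(p-1)}}{r^{2/(r-1)}}N_r(X)$ and the analogous bound for $N_q(Y)$, and that the exponents $1/|p'|,1/|q'|$ are positive, we substitute to obtain \eqref{eq:1}.

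Next I would carry out the bookkeeping converting \eqref{eq:1} into the advertised clean form. The identities needed are $\frac1{|p'|}=\frac{1-p}{p}$ (hence $\frac1{|p'|(p-1)}=-\frac1p$), the additivity $\frac1{|p'|}+\frac1{|q'|}=\frac1{|r'|}$, and $|r'|=\frac{r}{1-r}$ (so that $\frac{2}{(1-r)|r'|}=\frac2r$). Applying the first identity to the powers of $p$ and $q$, and collecting the two factors of $r$ via the additivity identity, the product of the three comparison factors collapses to $\frac{r^{2/r}}{p^{2/p}\,q^{2/q}}$. Multiplying this into $\frac{c_p c_q}{c_r}$ gives precisely $A(p,q,r)$ of \eqref{eq: defintion of constant A}, so
\[
N_r(X+Y)^{1/|r'|}\geq A(p,q,r)\,N_r(X)^{1/|p'|}\,N_r(Y)^{1/|q'|},
\]
and raising both sides to the power $|r'|$, with the exponents on $N_r(X)$ and $N_r(Y)$ becoming $\frac{|r'|}{|p'|}$ and $\frac{|r'|}{|q'|}$, yields the claim. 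This is essentially \eqref{eq:Multiplicative EPI}.

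There is no genuine obstacle here; the only thing requiring care is the algebra of the conjugate exponents, in particular the signs, since $p',q',r'$ are all negative on $(0,1)$, which is why the absolute values appear throughout. The one conceptual point worth stating explicitly is that the loss relative to the sharp reverse Young inequality is structurally unavoidable in this approach: every admissible pair $(p,q)$ satisfies $p,q>r$, so one can never take $p=q=r$, and the comparison step of Lemma \ref{reverse} must be paid. Theorem \ref{main} is then obtained from this stronger statement by optimizing $A(p,q,r)^{|r'|}$ together with the split of $N_r(X)$ and $N_r(Y)$ over the admissible parameters, which is handled separately.
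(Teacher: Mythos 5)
Your proposal is correct and follows essentially the same route as the paper: the information-theoretic reverse Young inequality \eqref{young-info}, the observation that the conjugate-exponent constraint forces $p,q>r$ so Lemma \ref{reverse} applies, and the same bookkeeping with $\frac{1}{|p'|}=\frac{1-p}{p}$ and $\frac{1}{|p'|}+\frac{1}{|q'|}=\frac{1}{|r'|}$ to collapse the comparison factors into $A(p,q,r)$ before raising to the power $|r'|$. No gaps; this is precisely the derivation of \eqref{eq:1} and \eqref{eq:Multiplicative EPI} in Section \ref{th1}.
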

Thus to complete our proof of Theorem \ref{main} it suffices to obtain for a fixed $r \in (0,1)$, an $\alpha>0$ such that for any given pair of independent log-concave random vectors $X$ and $Y$, there exist $0\leq p,q \leq 1$ such that $\frac 1 {p'} + \frac 1 {q'} = \frac 1 {r'}$ and
\begin{eqnarray} \label{eq: reduction to algebra}
	 A(p,q,r)^{\alpha|r'|} N_r(X)^{\frac{\alpha|r'|}{|p'|}} N_r(Y)^{\frac{\alpha|r'|}{|q'|}} \geq N_r^\alpha(X) + N_r^\alpha(Y).
\end{eqnarray}

Let us observe that there is nothing probabilistic about equation \eqref{eq: reduction to algebra}. If we write
$x = N_r(X)^{\alpha}$, $y = N_r(Y)^{\alpha}$, our R\'enyi-EPI is implied by the following algebraic inequality.

\begin{proposition}
	Given $r \in (0,1)$ and taking
	\begin{eqnarray}\label{alpha}
	\alpha = \frac{(1-r) \log 2 } { (1+r) \log(1+r) + r \log \frac 1 {4r}},
	\end{eqnarray}
	then for any $x,y >0$ there exist $0 < p,q < 1$ satisfying  $\frac 1 {p'} + \frac 1 {q'} = \frac 1 {r'}$ such that
	\begin{eqnarray} \label{eq:alg lem}
		 A(p,q,r)^{\alpha|r'|} x^{\frac{|r'|}{|p'|}} y^{\frac{|r'|}{|q'|}} \geq x + y.	
	\end{eqnarray}
\end{proposition}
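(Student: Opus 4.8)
The plan is to remove all probability from \eqref{eq:alg lem}, reduce it by homogeneity to a one-variable problem, and then apply Lemma \ref{lem:Calculus}.

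\textbf{Reduction.} First I would exploit that $\tfrac{|r'|}{|p'|}+\tfrac{|r'|}{|q'|} = |r'|\bigl(\tfrac1{|p'|}+\tfrac1{|q'|}\bigr) = 1$, so the inequality \eqref{eq:alg lem} is invariant under $(x,y)\mapsto(\lambda x,\lambda y)$; hence I may normalize $x+y=1$, write $x=t\in(0,1)$, $y=1-t$, and parametrize the admissible pairs by $\theta := \tfrac{|r'|}{|p'|}\in(0,1)$, so that $\tfrac{|r'|}{|q'|}=1-\theta$, $\tfrac1p=1+\tfrac\theta{|r'|}$, $\tfrac1q=1+\tfrac{1-\theta}{|r'|}$, and automatically $p,q\in(r,1)$. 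Then \eqref{eq:alg lem} reads $A(p,q,r)^{\alpha|r'|}\,t^\theta(1-t)^{1-\theta}\ge 1$. The natural choice is $\theta=t$, which turns $t^\theta(1-t)^{1-\theta}$ into $t^t(1-t)^{1-t}=e^{-H(t)}$ with $H(t):=-t\log t-(1-t)\log(1-t)$, so it remains to prove $\alpha\,|r'|\log A(p,q,r)\ge H(t)$ for every $t\in(0,1)$, with $p,q$ now the above functions of $t$.

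\textbf{Rewriting $A$ and setting up the calculus lemma.} A short computation with $c_m=m^{1/m}|m'|^{1/|m'|}$ (valid for $m\in(0,1)$) gives $\log c_m-\tfrac2m\log m = g\!\bigl(\tfrac1{|m'|}\bigr)$, where $g(s):=(1+s)\log(1+s)-s\log s$; therefore, writing $w:=\tfrac1{|r'|}$ (so that $\tfrac1{|p'|}=tw$, $\tfrac1{|q'|}=(1-t)w$, and $tw+(1-t)w=w$),
\[
\log A(p,q,r) = g(tw)+g((1-t)w)-g(w).
\]
I would then set $L(t):=H(t)$ and $F(t):=|r'|\log A(p,q,r)=\tfrac1w\bigl(g(tw)+g((1-t)w)-g(w)\bigr)$ on $[0,\tfrac12]$. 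Since $g$ is strictly concave, with $g''(s)=-\tfrac1{s(s+1)}$, and $g(0)=0$, it is strictly subadditive, so $F>0$ on $(0,1)$; also $F(0)=L(0)=0$, and $L,F$ are symmetric about $t=\tfrac12$ (for $F$ because $A(p,q,r)=A(q,p,r)$), hence $L'(\tfrac12)=F'(\tfrac12)=0$ and it suffices to work on $(0,\tfrac12]$. A direct evaluation gives $F(\tfrac12)=\tfrac1{1-r}\bigl((1+r)\log(1+r)+r\log\tfrac1{4r}\bigr)$, whence $L(\tfrac12)/F(\tfrac12)=\alpha$ with $\alpha$ as in \eqref{alpha}. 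So the task becomes exactly $\max_{t\in(0,1/2]}L(t)/F(t)=L(\tfrac12)/F(\tfrac12)$.

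\textbf{Applying Lemma \ref{lem:Calculus} with $c=\tfrac12$.} Continuity and the endpoint conditions were noted above. From $F'(t)=g'(tw)-g'((1-t)w)$ and the strict decrease of $g'$ one gets $F'>0$ on $(0,\tfrac12)$, and $F''(t)=w\bigl(g''(tw)+g''((1-t)w)\bigr)<0$, so $F$ is strictly increasing with $F'$ strictly decreasing. The one genuinely computational point is that $L''/F''$ is increasing: using $L''(t)=-\tfrac1{t(1-t)}$ and $g''(s)=-\tfrac1{s(s+1)}$, the ratio simplifies after clearing denominators to $L''(t)/F''(t)=1/\Psi(t)$ with $\Psi(t):=\tfrac{1-t}{1+tw}+\tfrac{t}{1+(1-t)w}>0$, and $\Psi'(t)=(1+w)\bigl(\tfrac1{(1+(1-t)w)^2}-\tfrac1{(1+tw)^2}\bigr)<0$ on $(0,\tfrac12)$, so $1/\Psi$ is increasing there. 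Lemma \ref{lem:Calculus} then yields $L/F$ increasing on $(0,\tfrac12)$, hence $\max_{[0,1/2]}L/F=\alpha$; unwinding the reduction gives \eqref{eq:alg lem}. The main obstacle is precisely the monotonicity of $L''/F''$ — everything else is soft — and the crux there is that after the substitution the ratio of second derivatives collapses to $1/\Psi(t)$ with $\Psi$ transparently decreasing. A secondary point requiring care is the bookkeeping that makes $L(\tfrac12)/F(\tfrac12)$ equal precisely the constant $\alpha$ of \eqref{alpha}; this is what fixes the exponent and shows the argument is not lossy at the symmetric configuration $x=y$.
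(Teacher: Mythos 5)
Your proposal is correct and follows essentially the same route as the paper: the same homogeneity reduction, the same choice $\tfrac1{|p'|}=x$, $\tfrac1{|q'|}=y$ (your $\theta=t$), and the same Lemma \ref{lem:Calculus} with the maximum at the symmetric point, yielding exactly the $\alpha$ of \eqref{alpha}. The only (cosmetic but pleasant) differences are that you apply the lemma directly to the pair $\bigl(H(t),\,|r'|\log A\bigr)$ rather than to the paper's $(L,F)$ followed by the identity $\sup F/(F-L)=(1-\sup L/F)^{-1}$, and that your rewriting $\log A=g(tw)+g((1-t)w)-g(w)$ with $g$ concave, $g(0)=0$, gives the positivity $A>1$ by subadditivity, which the paper instead proves separately in the appendix.
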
 

\begin{proof}
Using the homogeneity of equation \eqref{eq:alg lem}, we may assume without loss of generality that
\begin{equation} x+y = \frac{1}{|r'|}. \end{equation}
We then choose admissible $p,q$ by selecting $\frac{1}{p'} = -x$ and $\frac{1}{q'} = -y$. Hence, equation \eqref{eq:alg lem} becomes
\begin{equation} A(p,q,r)^{\alpha} \geq \frac{(x+y)^{x+y}}{x^x y^y}. \end{equation}
Let us note that $A(p,q,r) \geq 1$ (we will prove this fact in the appendix based on the description of $A(p,q,r)$ in \eqref{eq: x y description of A}), so that taking logarithms we can choose
\begin{equation} \alpha = \sup \frac{\log \left( \frac{(x+y)^{x+y}}{x^x y^y} \right) }{\log(A(p,q,r))}, \end{equation}
where the sup runs over all $x,y > 0$ satisfying $x+y = \frac{1}{|r'|}$ (recall that $r \in (0,1)$ is fixed). We claim that this is exactly the $\alpha$ defined in \eqref{alpha}. To establish this fact, let us first rewrite $A(p,q,r)$ in terms of $x$ and $y$. From,
\begin{equation} p = \frac{1}{x+1}, \qquad q = \frac{1}{y+1}, \qquad r = \frac{1}{x+y+1}, \label{eq: p,q,r definitions} \end{equation}
we can write,
\begin{eqnarray}
c_p & = & \frac{p^{1/p}}{|p'|^{1/p'}} = \frac{\frac{1}{(x+1)^{x+1}}}{\frac{1}{x^{-x}}} = \frac{1}{x^x (x+1)^{x+1}}, \label{eq: c_p definition} \\ c_q & = & \frac{1}{y^y (y+1)^{y+1}}, \label{eq: c_q defintion} \\ c_r & = & \frac{1}{(x+y)^{x+y} (x+y+1)^{x+y+1}} \label{eq: c_r defintion}.
\end{eqnarray}
From \eqref{eq: c_p definition} - \eqref{eq: c_r defintion} it follows that,
\begin{eqnarray}
A(p,q,r) &= & \frac{c_p c_q}{c_r} \frac{r^{\frac{2}{r}}}{p^{\frac{2}{p}} q^{\frac{2}{q}}} 
\\ 
    & = & 
        \frac{(x+y)^{x+y} (x+1)^{x+1} (y+1)^{y+1}}{x^x y^y (x+y+1)^{(x+y+1)}}. \label{eq: x y description of A}
\end{eqnarray}
Let us denote
\begin{equation} F(x) \triangleq \log \left( \frac{(x+y)^{x+y}}{x^x y^y} \right) = \frac{1}{|r'|} \log \left( \frac{1}{|r'|} \right) -x \log(x) - \left( \frac{1}{|r'|} - x \right) \log \left( \frac{1}{|r'|} - x \right), \end{equation}
and
\begin{equation} G(x) \triangleq \log \left( \frac{(x+y)^{x+y} (x+1)^{x+1} (y+1)^{y+1}}{x^x y^y (x+y+1)^{(x+y+1)}} \right) = F(x) - L(x), \end{equation}
where
\begin{equation} L(x) \triangleq \left( \frac{1}{|r'|} + 1 \right) \log \left( \frac{1}{|r'|} + 1 \right) - (x+1)\log(x+1) - \left(\frac{1}{|r'|} - x + 1 \right) \log \left(\frac{1}{|r'|} - x + 1 \right). \end{equation}
Our claim is that
\begin{equation}
	\alpha = \sup \frac{F}{G} = \sup \frac{F}{F-L} = \left( 1- \sup \frac{L}{F} \right)^{-1}.
\end{equation}
We invoke Lemma \ref{lem:Calculus}, to prove that the ratio $L/F$ is increasing on $[0,1/2|r'|]$.  Indeed, taking derivatives it is easy to see that $F$ is positive and increasing on $(0, 1/2|r'|]$, and its derivative $F'$ is strictly decreasing on the same interval.  
Furthermore, $\frac{L''}{F''}$ is non-decreasing on $(0, \frac{1}{2|r'|})$. Indeed,
\begin{equation} \frac{L''(x)}{F''(x)} = \frac{\frac{1}{|r'|}+2}{\frac{1}{|r'|}} \frac{x (\frac{1}{|r'|}-x)}{(x+1)(\frac{1}{|r'|}-x+1)}, \end{equation}
and one can see that this is non-decreasing when $x \in (0, \frac{1}{2|r'|})$ again, by taking the derivative. Now by Lemma \ref{lem:Calculus} applied to $F, L$, and $c = \frac 1 {2 |r'|}$ we have
\begin{equation}
	\sup \left( 1- \frac{L(x)}{F(x)} \right)^{-1} = \left( 1- \frac{L(c)}{F(c)} \right)^{-1} = \left( 1- \frac{L(1/2|r'|)}{F(1/2|r'|)} \right)^{-1}.
\end{equation}
Let us compute $F(c)$ and $L(c)$, with $c = \frac 1 {2 |r'|}$. We have
	\begin{eqnarray}
		F(c) 
			=
				2c \log 2c - 2c \log (c) = \frac{(1-r) \log 2}{r},
	\end{eqnarray}
and
	\begin{eqnarray}
		L(c)
			&=&
				(2c+1) \log(2c+1) - 2(c+1) \log(c+1)
					\\
			&=&
				\frac{\log \left(\frac{2}{ 1+r } \right) }{r} - \log \left( \frac{r+ 1 }{2r}  \right).
	\end{eqnarray}
	Thus
	\begin{eqnarray}
		 \alpha 
			 &=&
				 \left(1- \frac{L(c)}{F(c)}	\right)^{-1}
					 \\
			&=&
				\left( 1 - \frac{	\frac{\log \left(\frac{2}{ 1+r } \right) }{r} - \log \left( \frac{r+ 1 }{2r}  \right) }{	\frac{(1-r) \log 2}{r} } \right)^{-1}
					\\
			&=&
				\frac{(1-r) \log 2 } { (1+r) \log(1+r) + r \log \frac 1  {4r}}.
	\end{eqnarray}
\end{proof}




\section{Proof of Theorem \ref{main2}}

The proof is very similar to the proof of Theorem \ref{main}. The improvement is by virtue of the fact that for $U$ a random vector uniformly distributed on a set $A \subset \mathbb{R}^n$, the R\'enyi entropy is determined entirely by the volume of $A$, and is thus independent of parameter. Indeed,
\begin{eqnarray}
N_r(U) = \left(\int_{\mathbb{R}^n} \left(\mathbbm{1}_A (x)/\Vol(A) \right)^r dx \right)^{2/n(1-r)}  = \Vol(A)^{2/n}.
\end{eqnarray} 
We again use the information-theoretic version of the sharp Young inequality (see \eqref{young-info}):
\begin{equation} N_r(X+Y)^{\frac{1}{|r'|}} \geq C N_p(X)^{\frac{1}{|p'|}} N_q(Y)^{\frac{1}{|q'|}}. \end{equation}
Now, since $X$ and $Y$ are uniformly distributed, we have
\begin{equation} N_p(X) = N_r(X),  \qquad N_q(Y) = N_r(Y). \end{equation}
Hence,
\begin{eqnarray}\label{unif}
N_r(X+Y) \geq C^{|r'|} N_r(X)^{\frac{|r'|}{|p'|}} N_r(Y)^{\frac{|r'|}{|q'|}}.
\end{eqnarray}
Let us raise \eqref{unif} to the power $\beta$, and put $x=N_r(X)^{\beta}$, $y=N_r(Y)^{\beta}$. As before, we can assume that $x+y = \frac{1}{|r'|}$. Thus, it is enough to show that
\begin{equation} C^{\beta |r'|} x^{\frac{|r'|}{|p'|}} y^{\frac{|r'|}{|q'|}} \geq \frac{1}{|r'|}, \end{equation}
for some admissible $(p,q)$. Let us choose $p,q$ such that $x=\frac{1}{|p'|}$ and $y=\frac{1}{|q'|}$. The inequality is valid since
\begin{equation} \beta = \sup \frac{\log \left( \frac{(x+y)^{x+y}}{x^x y^y} \right) }{\log(C)} = \sup \frac{\log \left( \frac{(x+y)^{x+y}}{x^x y^y} \right) }{\log \left( \frac{(x+y)^{x+y}}{x^x y^y} \frac{(x+y+1)^{x+y+1}}{(x+1)^{x+1} (y+1)^{y+1}} \right)}, \end{equation}
where the sup runs over all $x,y > 0$ satisfying $x+y = \frac{1}{|r'|}$ (recall that $r \in (0,1)$ is fixed). Indeed, as in Section \ref{th1}, it is a consequence of Lemma \ref{lem:Calculus} that the sup is attained at $x = \frac{1}{2 |r'|}$ and from this the result follows.

\section{Lower bound on the optimal exponent}\label{sec:lower}

\begin{proposition}\label{lower-bound}

The optimal exponent $\alpha_{opt}$ that satisfies \eqref{eq:main} verifies,
\begin{equation} \max \left\{1, \frac{(1-r) \log 2}{2 \log \Gamma(r+1) + 2 r \log \frac 1 r} \right\} \leq \alpha_{opt} \leq \frac{(1-r) \log 2 } { (1+r) \log(1+r) + r \log \frac 1 {4r}}	\label{eq: upper and lower on alpha opt}. \end{equation}

\end{proposition}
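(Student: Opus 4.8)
The upper bound $\alpha_{opt}\le\alpha(r)$ is immediate from Theorem \ref{main} and the definition of $\alpha_{opt}$ as the infimum of admissible exponents, so the work is entirely in the lower bound. The plan is to produce two explicit one-parameter families of log-concave pairs $(X,Y)$ for which \eqref{eq:main} forces $\alpha$ to be at least the relevant quantity; since $\alpha_{opt}$ must majorize every such lower bound, taking the maximum gives the claim.

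For the constant $1$, I would take $X,Y$ independent standard Gaussian vectors in $\R^n$. For a centered Gaussian, $N_r$ is determined by the covariance and is $2$-homogeneous under dilations, since $h_r(\la X)=h_r(X)+n\log\la$; thus $N_r(X)=N_r(Y)$ and, as $X+Y$ is centered Gaussian with covariance $2I_n$, $N_r(X+Y)=2N_r(X)$. Substituting into \eqref{eq:main} gives $2^\alpha N_r(X)^\alpha\ge 2N_r(X)^\alpha$, hence $\alpha\ge 1$ and so $\alpha_{opt}\ge 1$.

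For the Gamma term I would work in dimension $n=1$ and take $X,Y$ independent with the log-concave exponential density $f(x)=e^{-x}\mathbbm{1}_{[0,\infty)}(x)$. Then $\int_\R f^r=\int_0^\infty e^{-rx}\,dx=1/r$, so $N_r(X)=N_r(Y)=r^{-2/(1-r)}$, while $X+Y$ has density $g(x)=xe^{-x}\mathbbm{1}_{[0,\infty)}(x)$ with
\begin{equation}
    \int_\R g^r=\int_0^\infty x^r e^{-rx}\,dx=\frac{\Gamma(r+1)}{r^{r+1}},
\end{equation}
hence $N_r(X+Y)=\bigl(\Gamma(r+1)\,r^{-(r+1)}\bigr)^{2/(1-r)}$. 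Plugging these into \eqref{eq:main} and cancelling the common power $r^{-2\alpha/(1-r)}$ reduces the inequality to $\bigl(\Gamma(r+1)\,r^{-r}\bigr)^{2\alpha/(1-r)}\ge 2$, that is,
\begin{equation}
    \alpha\ge\frac{(1-r)\log 2}{2\log\Gamma(r+1)+2r\log\frac1r}.
\end{equation}
Together with $\alpha_{opt}\ge 1$ and $\alpha_{opt}\le\alpha(r)$ this yields \eqref{eq: upper and lower on alpha opt}.

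I do not anticipate a real obstacle: both computations are elementary, the only non-calculus ingredient being the evaluation of the Gamma integral, which is precisely what generates the $\Gamma(r+1)$ appearing in the bound. The only genuinely thoughtful point is the choice of test distributions — Gaussians and one-sided exponentials are the natural near-extremizers for the reverse Young inequality underlying Theorem \ref{main}, which is why they recover $\alpha(r)$ up to the factor-$2$ gap discussed after that theorem. It is also worth recording that $2\log\Gamma(r+1)+2r\log\frac1r>0$ on $(0,1)$, so that the stated bound is a meaningful positive number; this holds because the term $-2r\log r>0$ dominates the small (possibly negative) contribution of $2\log\Gamma(1+r)$.
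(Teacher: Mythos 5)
Your proposal is correct and follows essentially the same route as the paper: the upper bound from Theorem \ref{main}, the Gaussian pair (using $X+Y\sim\sqrt{2}\,X$ and homogeneity of $N_r$) for $\alpha_{opt}\ge 1$, and the one-sided exponential pair with the Gamma-integral computation for the second lower bound. The algebra matches the paper's, so there is nothing to add.
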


Let us remark that smooth interpolation of Brunn-Minkowski and the EPI as in Theorem \ref{main2}, cannot hold for any class of random variables that contains the Gaussians. Indeed, let $Z_1$ and $Z_2$ be i.i.d. standard Gaussians. Hence, $Z_1 + Z_2 \sim \sqrt{2} Z_1$, and by homogeneity of R\'enyi entropy,
	\begin{equation}
	N_r^{\alpha}(Z_1+Z_2) = 2^{\alpha} N_r^\alpha(Z_1),
	\end{equation}
	while
	\begin{equation}
	N_r^{\alpha}(Z_1) + N_r^{\alpha}(Z_2) = 2 N_r^{\alpha}(Z_1).
	\end{equation}
	It follows that for a modified R\'enyi-EPI to hold, even when restricted to the class of log-concave random vectors, we must have $2^{\alpha} \geq 2$.  That is, $\alpha \geq 1$.
	
	We now show by direct computation on the exponential distribution on $(0,\infty)$ the lower bounds on $\alpha_{opt}$.
	
	Let $X \sim f_X$ be a random variable with exponential distribution, $f_X(x) = \mathbbm{1}_{(0,\infty)}(x) e^{-x}$. The computation of the R\'enyi entropy of $X$ is an obvious change of variables,
	\begin{eqnarray}
	N_r(X) = \left( \int f_X^r \right)^{\frac 2 {1-r}} = \left( \int_0^\infty e^{-rx} dx \right)^{\frac 2 {1-r}} = \left(	\frac 1 r \right)^{\frac 2 {1-r}}.
	\end{eqnarray}	
	Let $Y$ be an independent copy of $X$. The density of $X+Y$ is
	\begin{eqnarray}
	f*f(x) 
	&=& \int_{-\infty}^{\infty} \mathbbm{1}_{(0,\infty)}(x-y) e^{-(x-y)} \mathbbm{1}_{(0,\infty)}(y) e^{-y} dy
	\\
	&=&
	\mathbbm{1}_{(0,\infty)} x e^{-x}.
	\end{eqnarray}
	Hence,
	\begin{eqnarray}
	N_r(X+Y)
	&=&
	\left( \int \mathbbm{1}_{(0,\infty)}(x) x^r e^{-rx} dx \right)^{\frac 2 {1-r}}
	\\
	&=&
	\left( \frac {\Gamma(r+1)} {r^{r+1}}   \right)^{\frac 2 {1-r}}.
	\end{eqnarray}
	Since the optimal exponent $\alpha_{opt}$ satisfies
	\begin{equation}
	N_r^{\alpha_{opt}}(X+Y) \geq 2 N_r^{\alpha_{opt}}(X),
	\end{equation}
	we have
	\begin{equation}
	\left( \frac {\Gamma(r+1)} {r^{r+1}}   \right)^{\frac {2\alpha_{opt}} {1-r}} 
	\geq 2 \left(	\frac 1 r \right)^{\frac {2\alpha_{opt}} {1-r}}.
	\end{equation}
	Canceling and taking logarithms, this rearranges to
	\begin{equation}
	\log \Gamma(r+1) + r \log \frac 1 r \geq \frac{(1-r) \log 2}{2\alpha_{opt}},
	\end{equation}
	which implies that we must have
	\begin{equation}
		\alpha_{opt} \geq \frac{(1-r) \log 2 }{ 2 ( \log \Gamma(r+1) + r \log \frac 1 r)}.
	\end{equation}
	Note that by the log-convexity of $\Gamma$ and the fact that $\Gamma(1) = \Gamma(2) =1$, we have $\log(\Gamma(1+r)) \leq 0$, which implies
	\begin{equation}
	\alpha_{opt} \geq \frac{(1-r) \log 2 }{2 r \log \frac 1 r}.
	\end{equation}
	In particular we must have $\alpha_{opt} r^{1-\varepsilon} \to \infty$ with $r \to 0$, for any $\varepsilon>0$.

\section{Proof of Theorem \ref{thm: bobkov chistyakov Repi for small r}}\label{sec:k-tuples}

The reverse sharp Young inequality can be generalized to $k \geq 2$ functions in the following way.

\begin{theorem}[\cite{BL76b}] \label{thm: sharp young multiple functions}
    Let $f_1, \dots, f_k \colon \mathbb{R}^n \to \mathbb{R}$ and $r, r_1, \dots, r_k \in (0,1)$ such that $\frac 1 {r_1'} + \cdots + \frac 1 {r_k'} = \frac 1 {r'}$. Then,
    \begin{equation}
        \| f_1 * \cdots * f_k \|_r \geq C^{\frac n 2} \prod_{i=1}^k \| f_i \|_{r_i}.
    \end{equation}
    Here,
    \begin{equation} \label{eq:extendend young constants}
        C = C(r,r_1, \dots, r_k) = \frac{\prod_{i=1}^k c_{r_i}}{c_r},
    \end{equation}
    where we recall that $c_m$ is defined in \eqref{eq: c_m defintion} as $c_m  = \frac{m^{\frac 1 m}}{|m'|^{\frac 1 {m'}}}$.
\end{theorem}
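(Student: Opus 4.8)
The plan is to prove Theorem~\ref{thm: sharp young multiple functions} by induction on the number of functions $k$, with the two-function reverse Young inequality \eqref{young} serving as the base case $k = 2$. First I would reduce to nonnegative functions: since $|f_1 * \cdots * f_k| \leq |f_1| * \cdots * |f_k|$ pointwise and $g \mapsto \|g\|_r = \left( \int g^r \right)^{1/r}$ is monotone on nonnegative functions, replacing each $f_i$ by $|f_i|$ only increases the left-hand side while leaving the right-hand side unchanged. One may also assume every $\|f_i\|_{r_i}$ is finite, since otherwise the inequality is trivial.

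For the inductive step, assume the statement for $k-1$ functions and let $r, r_1, \dots, r_k \in (0,1)$ satisfy $\frac{1}{r_1'} + \cdots + \frac{1}{r_k'} = \frac{1}{r'}$. Introduce the intermediate parameter $s$ defined by $\frac{1}{s'} = \frac{1}{r_1'} + \cdots + \frac{1}{r_{k-1}'}$. Before using it I would verify that $s$ is admissible, i.e.\ $s \in (0,1)$: each $\frac{1}{r_i'} = 1 - \frac{1}{r_i}$ is strictly negative because $r_i \in (0,1)$, so $\frac{1}{s'}$ is a finite negative quantity, which forces $\frac{1}{s} = 1 - \frac{1}{s'} > 1$ and hence $s \in (0,1)$. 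By construction $\frac{1}{s'} + \frac{1}{r_k'} = \frac{1}{r'}$, so the triple $(s, r_k, r)$ is admissible for \eqref{young}.

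Write $g = f_1 * \cdots * f_{k-1}$. The inductive hypothesis applied to $f_1, \dots, f_{k-1}$ with output exponent $s$ gives
\begin{equation*}
\|g\|_s \geq \left( \frac{c_{r_1} \cdots c_{r_{k-1}}}{c_s} \right)^{n/2} \|f_1\|_{r_1} \cdots \|f_{k-1}\|_{r_{k-1}},
\end{equation*}
while \eqref{young} applied to $g$ and $f_k$ gives
\begin{equation*}
\|f_1 * \cdots * f_k\|_r = \|g * f_k\|_r \geq \left( \frac{c_s\, c_{r_k}}{c_r} \right)^{n/2} \|g\|_s\, \|f_k\|_{r_k}.
\end{equation*}
Multiplying the two estimates, the factor $c_s$ cancels and one is left exactly with the assertion of the theorem, namely the bound with $C = \frac{c_{r_1} \cdots c_{r_k}}{c_r}$.

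The argument is largely mechanical once \eqref{young} is granted; the one place that genuinely needs attention is checking that the intermediate exponent $s$ really lies in $(0,1)$ (and in particular is finite), so that both the inductive hypothesis and the two-function inequality are legitimately applicable. This is precisely what makes the telescoping of the constants work, and, as observed above, it follows at once from the sign of $\frac{1}{r_i'}$ for $r_i \in (0,1)$.
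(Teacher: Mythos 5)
The paper does not actually prove this statement---it is quoted directly from Brascamp and Lieb \cite{BL76b}---so there is no internal proof to compare against. Your induction on $k$ from the two-function inequality \eqref{young} is a legitimate and standard way to supply one: the intermediate exponent $s$ defined by $\frac{1}{s'}=\frac{1}{r_1'}+\cdots+\frac{1}{r_{k-1}'}$ is indeed admissible (each $\frac{1}{r_i'}<0$ forces $\frac1s>1$, hence $s\in(0,1)$), the triple $(s,r_k,r)$ satisfies the scaling condition, and multiplying the inductive bound for $\|f_1*\cdots*f_{k-1}\|_s$ with \eqref{young} applied to $g=f_1*\cdots*f_{k-1}$ and $f_k$ makes $c_s$ cancel, giving exactly $C=\bigl(\prod_{i=1}^k c_{r_i}\bigr)/c_r$. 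The core of the argument is sound and yields the stated constant.

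One caveat: your preliminary reduction to nonnegative functions is logically backwards. For a \emph{lower} bound, replacing $f_i$ by $|f_i|$ increases the left-hand side, so proving the inequality for $|f_i|$ does not imply it for signed $f_i$; and in fact the reverse Young inequality with exponents in $(0,1)$ is false for general signed functions, since cancellation can make $f_1*f_2$ arbitrarily small while the right-hand side stays positive. The correct reading---consistent with the paper's definition of $\|\cdot\|_p$, which is given only for nonnegative $f$, and with \cite{BL76b} and the intended application to probability densities---is that the $f_i$ are nonnegative, so no such reduction is needed. Likewise, the remark that infinite $\|f_i\|_{r_i}$ renders the inequality ``trivial'' has the logic of an upper bound: here the trivial case is rather when some $\|f_i\|_{r_i}=0$, while an infinite norm is handled automatically by the chain of inequalities. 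Neither slip touches the inductive step itself, but the reduction paragraph should be deleted or replaced by the hypothesis that the $f_i$ are nonnegative.
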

We have the following information-theoretic reformulation for $X_1, \dots, X_k$ independent random vectors,
\begin{equation}\label{eq:gen-young}
    N_r(X_1 + \cdots + X_k) \geq C^{|r'|} \prod_{i=1}^k N_{r_i}^{|r'|/|r_i'|}(X_i).
\end{equation}
Thus when we restrict to log-concave random vectors $X_i$, $1 \leq i \leq k$, and invoke Lemma \ref{reverse}, we can collect our observations as the following.

\begin{theorem} \label{thm: General Young + log-conncave comparison}
Let $r, r_1, \dots, r_k \in (0,1)$ such that $\sum_{i=1}^k \frac 1 {r_i'} = \frac  1 {r'}$. Let $X_1, \dots, X_k$ be independent log-concave random vectors. Then,
\begin{eqnarray}\label{thm:eq-gen}
    N_r(X_1 + \cdots + X_k) 
        &\geq& 
            A^{|r'|} \prod_{i=1}^k N_r^{t_i}(X_i),
\end{eqnarray}
where $t_i = r'/r_i'$ and $A = A(r,r_1, \dots, r_n) = \frac{ \prod_{i=1}^k A_{r_i}}{A_r}$ with $A_m = \frac{ |m'|^{\frac 1 {|m'|}}}{m^{\frac 1 m}}$.
\end{theorem}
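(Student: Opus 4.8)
The plan is to run the two‑summand argument of Section~\ref{th1} verbatim with $k$ summands, starting from the information‑theoretic reverse Young inequality \eqref{eq:gen-young} and correcting each factor by Lemma~\ref{reverse}. First I would record the structural observation that the constraint $\sum_{i=1}^k \frac{1}{r_i'} = \frac{1}{r'}$ forces $r_i > r$ for every $i$: each summand $\frac{1}{|r_i'|}$ is positive, so $\frac{1}{|r'|} > \frac{1}{|r_i'|}$, i.e.\ $|r'| < |r_i'|$, and since $m \mapsto |m'| = \frac{m}{1-m}$ is increasing on $(0,1)$ this gives $r < r_i$. This is the only place where one must be slightly careful, since it is what makes Lemma~\ref{reverse} applicable in the direction we need; with $p = r$ and $q = r_i$ the lemma gives
\[
    N_{r_i}(X_i) \;\geq\; \frac{r_i^{\frac{2}{r_i-1}}}{r^{\frac{2}{r-1}}}\, N_r(X_i).
\]

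Next I would substitute these $k$ estimates into \eqref{eq:gen-young}, obtaining
\[
    N_r(X_1+\cdots+X_k) \;\geq\; C^{|r'|}\left[\prod_{i=1}^k\left(\frac{r_i^{\frac{2}{r_i-1}}}{r^{\frac{2}{r-1}}}\right)^{\frac{1}{|r_i'|}}\right]^{|r'|}\prod_{i=1}^k N_r(X_i)^{\frac{|r'|}{|r_i'|}},
\]
and then simplify the bracketed constant using the same elementary identities as in Section~\ref{th1}. Namely $\frac{1}{|r_i'|(r_i-1)} = -\frac{1}{r_i}$ turns $r_i^{\frac{2}{(r_i-1)|r_i'|}}$ into $r_i^{-2/r_i}$, while $\sum_{i=1}^k \frac{1}{|r_i'|} = \frac{1}{|r'|}$ together with $\frac{1}{|r'|(r-1)} = -\frac1r$ turns $\prod_i r^{-\frac{2}{(r-1)|r_i'|}}$ into $r^{2/r}$. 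Hence the constant in front of $\prod_i N_r(X_i)^{|r'|/|r_i'|}$ is $\bigl(C\, r^{2/r}\prod_{i=1}^k r_i^{-2/r_i}\bigr)^{|r'|}$.

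Finally I would identify this constant with $A^{|r'|}$ for $A$ as in the statement. Writing $c_m = m^{1/m}|m'|^{1/|m'|}$ (valid for $m \in (0,1)$ because $\frac1{m'} = -\frac1{|m'|}$) and recognizing $A_m = \frac{|m'|^{1/|m'|}}{m^{1/m}} = c_m\, m^{-2/m}$, a one‑line computation gives
\[
    \frac{\prod_{i=1}^k A_{r_i}}{A_r} \;=\; \frac{\prod_{i=1}^k c_{r_i}}{c_r}\cdot\frac{\prod_{i=1}^k r_i^{-2/r_i}}{r^{-2/r}} \;=\; C\, r^{2/r}\prod_{i=1}^k r_i^{-2/r_i} \;=\; A .
\]
Since moreover $\frac{|r'|}{|r_i'|} = \frac{r'}{r_i'} = t_i$ (both $r',r_i'$ are negative for $r,r_i\in(0,1)$), the displayed inequality \eqref{thm:eq-gen} follows. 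The whole argument is essentially bookkeeping of exponents; there is no substantial obstacle beyond the check that $r_i > r$, which legitimizes the use of Lemma~\ref{reverse}.
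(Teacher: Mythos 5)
Your proposal is correct and follows essentially the same route as the paper: combine the $k$-function information-theoretic reverse Young inequality \eqref{eq:gen-young} with Lemma \ref{reverse} applied to each summand (using that the constraint forces $r_i > r$), then rewrite the resulting constant as $A^{|r'|}$. Your explicit check that $r_i > r$ and the algebra identifying $C\, r^{2/r}\prod_i r_i^{-2/r_i}$ with $A$ simply spell out steps the paper leaves implicit (or carries out in the two-summand case of Section \ref{th1}).
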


\begin{proof}
By combining \eqref{eq:gen-young} with Lemma \ref{reverse}, we obtain
\begin{eqnarray}
    N_r(X_1 + \cdots + X_k) 
        &\geq& 
            C^{|r'|} \prod_{i=1}^k \left( N_r(X_i) \left(\frac{ r^{|r'|/r}}{r_i^{|r_i'|/r_i}} \right)^2 \right)^{|r'|/|r_i'|}
                \\
        &=&
            C^{|r'|} \prod_{i=1}^k \left(\frac{ r^{|r'|/r}}{r_i^{|r_i'|/r_i}} \right)^{\frac{2|r'|}{|r_i'|}} \prod_{i=1}^k N_r^{\frac{r'}{r'_i}}(X_i)
                \\
        &=&
            A(r,r_1, \dots, r_k)^{|r'|} \prod_{i=1}^k N_r^{t_i}(X_i).
\end{eqnarray}
\end{proof}

Now let us show that Theorem \ref{thm: General Young + log-conncave comparison} implies a super-additivity property for the R\'enyi entropy and independent log-concave vectors.

\begin{proof}[Proof of Theorem \ref{thm: bobkov chistyakov Repi for small r}]
By the homogeneity of equation \eqref{eq:k-tuples}, we can assume without loss of generality that $\sum_{i=1}^k N_r(X_i) = 1$.  From Theorem \ref{thm: General Young + log-conncave comparison}, for every $r_1, \dots, r_k \in (0,1)$ such that $\sum_i \frac 1 {r_i'} = \frac 1 {r'}$ we have
\begin{eqnarray}
   N_r(X_1 + \cdots + X_k) 
        &\geq& 
            A^{|r'|} \prod_{i=1}^k N_r^{t_i}(X_i)
                \\
        &=&
            \frac{ r^{\frac{|r'|} r}\prod_{i=1}^k \left(\frac{ |r_i'|^{\frac 1 {|r_i'|}}}{r_i^{\frac 1 {r_i}}}\right)^{|r'|}N_r^{t_i}(X_i)}{ |r'|},
\end{eqnarray}
where $t_i = r'/r_i'$. Thus,
\begin{eqnarray}
    \log N_r(X_1 + \cdots + X_k)
        &\geq&
            \sum_{i=1}^k\left( t_i \log \frac{|r'|}{t_i}  - |r'| \frac {\log r_i}{r_i}  \right) 
                \\ \nonumber
        && +  \left(  \frac {|r'| \log{r}} r  - {\log {|r'|}} \right)
            \\ \nonumber
        && +
            \sum_{i=1}^k t_i \log N_r(X_i).
\end{eqnarray}
Since $\frac 1 {r_i} = 1 + \frac {t_i}{|r'|}$ and  $r_i = |r_i'|/( 1 + |r_i'|)$, we deduce that
\begin{eqnarray}
   \log N_r(X_1 + \cdots + X_k)
   \geq
   \frac{|r'| \log r }{r} + \sum_{i=1}^k |r'| \left(1 + \frac{t_i}{|r'|} \right) \log  \left(1 + \frac{t_i}{|r'|} \right) +  \sum_{i=1}^k t_i \log \frac{N_r(X_i)}{t_i}.
\end{eqnarray}
It follows that
\begin{equation} N_r(X_1 + \cdots + X_k) \geq c(r,k), \end{equation}
with
\begin{equation}
    c(r,k) \triangleq \inf_\lambda \sup_t  \left( \exp \left\{ \frac{ |r'| \log r}{r} + \sum_{i=1}^k  |r'| \left(1 + \frac{t_i}{|r'|} \right) \log \left(1 + \frac{t_i}{|r'|} \right)+  \sum_{i=1}^k t_i \log \frac{ \lambda_i}{t_i} \right\} \right),
\end{equation}
where the infimum runs over all $\lambda = (\lambda_1 , \dots, \lambda_k)$ such that $\lambda_i \geq 0$ and $\sum_{i=1}^k \lambda_i = 1$, and the supremum runs over all $t =(t_1, \dots, t_n)$ such that $t_i \geq 0$ and $\sum_{i=1}^k t_i =1$.
For a fixed $\lambda$, we can always choose $t = \lambda$,  and thus
\begin{eqnarray}\label{eq:inf}
    c(r,k) 
        &\geq& 
            \inf_t \exp \left\{ \frac{ |r'| \log r}{r} + \sum_{i=1}^k  |r'| \left( 1 + \frac{t_i}{|r'|} \right)\log \left(1 + \frac{t_i}{|r'|} \right) \right\}.
\end{eqnarray}
Due to the convexity of the function $G(u) \triangleq u\log(u)$, $u>0$, we have
\begin{eqnarray}\label{eq:infimum-1}
G\left( 1 + \frac{t_i}{|r'|} \right) \geq G\left( 1 + \frac{1}{k|r'|} \right) + G'\left( 1 + \frac{1}{k|r'|} \right) \left( \frac{t_i}{|r'|} - \frac{1}{k|r'|} \right).
\end{eqnarray}
Using the fact that $\sum_{i=1}^k t_i = 1$, inequality \eqref{eq:infimum-1} yields
\begin{eqnarray}\label{eq:infimum-2}
|r'| \sum_{i=1}^k \left( 1 + \frac{t_i}{|r'|} \right) \log \left(1 + \frac{t_i}{|r'|} \right) \geq \left( k|r'| + 1 \right) \log \left(1 + \frac{1}{k|r'|} \right).
\end{eqnarray}
Since there is equality in \eqref{eq:infimum-2} when $t_i = \frac{1}{k}$, $i = 1, \dots, k$, we deduce that the infimum in \eqref{eq:inf} is attained at $t_i = \frac{1}{k}$, $i = 1, \dots, k$. As a consequence, we have
\begin{eqnarray} \label{eq: c(r,k) bound}
c(r,k) \geq r^{\frac 1 {1-r}} \left( 1 + \frac 1 {k |r'|} \right)^{1 + k |r'|}.
\end{eqnarray}
\end{proof}

\begin{proposition}\label{sharp-k-tuples}
    The largest constant $c_{opt}(r)$ satisfying
    \begin{eqnarray}
        N_r(X_1 + \cdots + X_k) \geq c_{opt}(r) \sum_{i=1}^k N_r(X_i)
    \end{eqnarray}
    for any $k$-tuples of independent log-concave random vectors satisfies
    \begin{eqnarray} 
       e r^{\frac 1 {1-r}} \leq c_{opt}(r) \leq \pi r^{\frac 1 {1-r}}.
    \end{eqnarray}
\end{proposition}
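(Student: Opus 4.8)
The plan is to establish the two inequalities separately. For the lower bound $c_{opt}(r) \geq e r^{1/(1-r)}$, I would argue that it suffices to let $k \to \infty$ in the bound \eqref{eq: c(r,k) bound} obtained in the proof of Theorem \ref{thm: bobkov chistyakov Repi for small r}. Precisely, since $c_{opt}(r)$ must work for \emph{every} $k$, we have $c_{opt}(r) \geq \inf_k c(r,k) \geq \inf_k r^{1/(1-r)}(1 + \tfrac{1}{k|r'|})^{1+k|r'|}$. The function $m \mapsto (1 + 1/m)^{m+1}$ is decreasing in $m>0$ with limit $e$ as $m \to \infty$; applying this with $m = k|r'|$ gives $(1 + \tfrac{1}{k|r'|})^{1 + k|r'|} > e$ for every $k$, hence $c(r,k) \geq e r^{1/(1-r)}$ uniformly in $k$, and therefore $c_{opt}(r) \geq e r^{1/(1-r)}$. (One should double-check the easy calculus fact that $(1+1/m)^{m+1}$ decreases to $e$; this is standard.)

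For the upper bound $c_{opt}(r) \leq \pi r^{1/(1-r)}$, the natural approach is to exhibit a family of log-concave random vectors — in fact it should suffice to take $k=2$ and a single one-dimensional example — forcing the inequality. The example of choice is the \emph{one-sided exponential} already used in Section \ref{sec:lower}: with $f_X(x) = \mathbbm{1}_{(0,\infty)}(x) e^{-x}$ and $Y$ an independent copy, the computations there give $N_r(X) = N_r(Y) = (1/r)^{2/(1-r)}$ and $N_r(X+Y) = (\Gamma(r+1)/r^{r+1})^{2/(1-r)}$. Plugging into $N_r(X+Y) \geq c_{opt}(r)(N_r(X) + N_r(Y)) = 2 c_{opt}(r)(1/r)^{2/(1-r)}$ and solving for $c_{opt}(r)$ yields
\begin{equation}
    c_{opt}(r) \leq \frac{1}{2}\left( \frac{\Gamma(r+1)}{r^{r+1}} \right)^{\frac{2}{1-r}} \left(\frac{1}{r}\right)^{-\frac{2}{1-r}} = \frac{1}{2} \Gamma(r+1)^{\frac{2}{1-r}} r^{-\frac{2r}{1-r}} = \frac{1}{2}\Gamma(1+r)^{\frac{2}{1-r}}\, r^{\frac{1}{1-r}}\cdot r^{-\frac{1+r}{1-r}+1},
\end{equation}
so after simplifying the powers of $r$ the bound reduces to $c_{opt}(r) \leq \tfrac12 \Gamma(1+r)^{2/(1-r)} r^{-2r/(1-r)}$, and it remains to compare this against $\pi r^{1/(1-r)}$. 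Thus the task becomes the elementary (but genuinely non-trivial) inequality
\begin{equation}
    \tfrac{1}{2}\, \Gamma(1+r)^{\frac{2}{1-r}}\, r^{-\frac{2r}{1-r}} \leq \pi\, r^{\frac{1}{1-r}},
\end{equation}
equivalently, after taking logarithms and multiplying by $(1-r)/2$,
\begin{equation}
    \log \Gamma(1+r) \leq \tfrac{1-r}{2}\log(2\pi) + r \log r + \tfrac{1}{2}\log r + \tfrac{r^2}{1-r}\log r + \cdots
\end{equation}
— I would reorganize this into the clean form $2\log\Gamma(1+r) + 2r\log\tfrac1r \leq (1-r)\log 2 / \alpha^{\star}(r)$ for the relevant $\alpha^{\star}$, or simply verify the one-variable inequality directly on $(0,1)$.

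The main obstacle is this last elementary estimate: one must bound $\Gamma(1+r)$ from above on $(0,1)$ sharply enough that the factor $\pi$ (rather than something larger) comes out. The cleanest route is to use the known bounds $\Gamma(1+r) \leq 1$ for $r \in [0,1]$ (from log-convexity of $\Gamma$ together with $\Gamma(1)=\Gamma(2)=1$), which already kills the $\Gamma$ factor, reducing the claim to $\tfrac12 r^{-2r/(1-r)} \leq \pi r^{1/(1-r)}$, i.e. $r^{-(1+2r)/(1-r)} \leq 2\pi$... which fails as $r \to 1$, so the crude bound $\Gamma(1+r)\le 1$ is too lossy near $r=1$ and one instead needs the finer behavior $\Gamma(1+r) \sim 1 - \gamma r$ near $0$ and the exact cancellation near $r=1$. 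I would therefore treat the two regimes ($r$ near $0$, where the dominant term is $r^{1/(1-r)} \to$ a constant and one checks the constant is below $\pi$; and $r$ near $1$, where both sides behave like $e^{c/(1-r)}$ and the leading exponents must be compared) and interpolate, possibly aided by a numerical check as the authors do elsewhere for \eqref{eq: ratio of upper and lower bounds}. Throughout, the observation that the worst case is already captured by $k = 2$ (because $c_{opt}$ is an infimum over all $k$, and larger $k$ only relaxes the constraint via the term $\sum N_r(X_i)$ growing) keeps the example-side of the argument one-dimensional and fully explicit.
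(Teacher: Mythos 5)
Your lower bound is exactly the paper's argument: $c_{opt}(r)\geq c(r,k)$ for every $k$, the map $k\mapsto\bigl(1+\tfrac{1}{k|r'|}\bigr)^{1+k|r'|}$ decreases to $e$, and so \eqref{eq: c(r,k) bound} gives $c_{opt}(r)\geq e\,r^{1/(1-r)}$. That half is fine.

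The upper-bound half has a genuine gap. The two-summand exponential example only yields $c_{opt}(r)\leq \tfrac12\,\Gamma(1+r)^{2/(1-r)}\,r^{-2r/(1-r)}$, and the elementary inequality you would then need, $\tfrac12\,\Gamma(1+r)^{2/(1-r)}r^{-2r/(1-r)}\leq\pi r^{1/(1-r)}$, is simply false on $(0,1)$: at $r=\tfrac12$ the left side is $\pi^2/8\approx1.23$ while the right side is $\pi/4\approx0.79$; as $r\to0^+$ the left side tends to $\tfrac12$ while the right side tends to $0$; as $r\to1^-$ the left side tends to $e^{2\gamma}/2\approx1.59$ while the right side tends to $\pi/e\approx1.16$. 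More fundamentally, no fixed-$k$ example (in particular none with $k=2$) can produce an upper bound vanishing as $r\to0$, because Theorem \ref{thm: bobkov chistyakov Repi for small r} itself shows the inequality holds for $k=2$ with constant $r^{1/(1-r)}\bigl(1+\tfrac1{2|r'|}\bigr)^{1+2|r'|}$, which stays near $\tfrac12$ for small $r$, whereas $\pi r^{1/(1-r)}\to0$. Your closing remark that ``the worst case is already captured by $k=2$'' reverses the quantifier: $c_{opt}(r)$ must satisfy the inequality for \emph{every} $k$, so larger $k$ adds constraints rather than relaxing them, and the binding examples occur as $k\to\infty$. That is how the paper proves the upper bound: take $X_1,\dots,X_k$ i.i.d.\ Laplace with variance $1$, use homogeneity to write $N_r(X_1+\cdots+X_k)=k\,N_r\bigl((X_1+\cdots+X_k)/\sqrt{k}\bigr)$, so the defining inequality forces $N_r\bigl((X_1+\cdots+X_k)/\sqrt{k}\bigr)\geq c_{opt}(r)\,N_r(X_1)$, and then invoke convergence of the R\'enyi entropy along the central limit theorem (justified via local limit theorems together with the uniform sub-exponential domination of centered log-concave densities of fixed variance) to obtain $N_r(Z)=2\pi r^{1/(r-1)}\geq c_{opt}(r)\,N_r(X_1)$; since the variance-one Laplace has $N_r(X_1)=2r^{2/(r-1)}$, this gives $c_{opt}(r)\leq\pi r^{1/(1-r)}$. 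So the example side of your plan must be replaced by a growing-$k$ CLT argument; no refinement of the $\Gamma$-function estimate will rescue the $k=2$ route.
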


\begin{proof}
Note that the function $k \mapsto \left( 1 + \frac 1 {k |r'|} \right)^{1 + k |r'|}$ decreases to $e$ in $k$.  Thus, taking the limit in \eqref{eq: c(r,k) bound} we have
\begin{eqnarray}
   c_{opt}(r) \geq c(r,k) \geq e r^{\frac 1 {1-r}}.
\end{eqnarray}

On the other hand, specializing the inequality
\begin{equation}
    N_r(X_1+ \cdots+ X_k) \geq c_{opt}(r) \sum_{i=1}^k N_r(X_i)
\end{equation}
to the case in which $X_1, \dots, X_k$ are i.i.d., we must have
\begin{equation} \label{eq: CLT bounds on Renyi} \liminf_{k \to \infty} N_r\left( \frac{X_1 + \cdots + X_k}{\sqrt{k}} \right) \geq c_{opt}(r) N_r(X_1). \end{equation}

Notice that if $X_1$ is a centered log-concave random variable of variance $1$, then the  $\frac{X_1 + \cdots + X_k}{\sqrt{k}}$ are also log-concave random variables of variance $1$, converging weakly by the central limit theorem to a standard normal random variable $Z$.  Moreover, letting $f_k$ denote the density of $\frac{X_1+ \cdots + X_k}{\sqrt{k}}$ one may apply the argument of \cite[Theorem 1.1]{BM18} to $r \in (0,1)$ when one has
\begin{eqnarray} \label{eq:tail bounds for log-concave}
   \lim_{T \to \infty} \int_{\{ |x| > T\}} f_k^r(x) dx = 0
\end{eqnarray}
uniformly in $k$, to conclude that
\begin{equation} \label{eq: convergence in Renyi entropy for CLT} \lim_{k \to \infty} N_r\left( \frac{X_1 + \cdots + X_k}{\sqrt{k}} \right) = N_r(Z) = 2 \pi r^{\frac 1 {r-1}}. \end{equation}
Alternatively, one can arrive at \eqref{eq: convergence in Renyi entropy for CLT} by invoking classical local limit theorems \cite{GK68:book,petrov1964local} to obtain pointwise convergence of the densities, and conclude with Lebesgue dominated convergence to interchange the limit. Recall that the class of centered log-concave densities with a fixed variance can be bounded uniformly by a single sub-exponential function $Ce^{-c|x|}$ for universal constants $C,c >0$ depending only on the variance. This gives the existence of all moments, in particular a third moment requisite for the local limit theorem, additionally it gives domination by an integrable function.

Inserting \eqref{eq: convergence in Renyi entropy for CLT} into \eqref{eq: CLT bounds on Renyi}, we see that $c_{opt}(r)$ must satisfy
\begin{equation}
     2 \pi r^{\frac 1 {r-1}} \geq c_{opt}(r) N_r(X_1).
\end{equation}

For $X_1$ having a Laplace distribution of variance $1$, its density is $f(x) = \frac{e^{-\sqrt{2}|x|}}{\sqrt{2}}$ on $(-\infty,\infty)$, so that 
        \begin{equation}
            N_r(X_1) = 2 r^{\frac{2}{r-1}}.
        \end{equation}
We conclude that
\begin{equation}
\pi r^{\frac 1 {1-r}} \geq c_{opt}(r).
\end{equation}
\end{proof}

Proposition \ref{sharp-k-tuples} shows that there does not exist a universal constant $C$ (independent of $r$ and $k$) such that the inequality
\begin{equation}
    N_r(X_1 + \cdots + X_k) \geq C \sum_{i=1}^k N_r(X_i)
\end{equation}
holds.  Note that this is in contrast with the case $r \geq 1$ when $C = \frac 1 e$ suffices.

\section{Concluding Remarks}

We have shown that a R\'enyi EPI does hold for $r \in (0,1)$, at least for log-concave random vectors, for the R\'enyi EPI of the form \eqref{eq:Bobkov Chistyakov REPI}, as well as for the R\'enyi entropy power raised to a power $\alpha$ as in \eqref{eq: alpha definition}. Let us comment on the sharpness of the $\alpha$ derived, and contrast this behavior with that of the constant derived for uniform distributions $\beta$ from \eqref{eq: definition of beta}.

Due to Madiman and Wang \cite{WM14} the R\'enyi entropy of independent sums decreases on spherically symmetric decreasing rearrangement. Let us recall a few definitions. For a measurable set $A \subset \R^n$, denote by $A^*$ the open origin symmetric Euclidean ball satisfying $\Vol(A) = \Vol(A^*)$.  For a non-negative measurable function $f$, define its symmetric decreasing rearrangement by 
\begin{equation}
	f^*(x) = \int_{0}^{	\infty} \mathbbm{1}_{\{f > t\}^*}(x) dt.
\end{equation}
\begin{theorem}[\cite{WM14}]\label{th:rearrange}
    If $f_i$ are probability density functions and $f_i^*$ denote their spherically symmetric decreasing rearrangements, then
    \begin{equation}
        N_r(X_1 + \cdots + X_k) \geq N_r(X_1^* + \cdots + X_k^*)
    \end{equation}
    for any $r \in [0,\infty]$, where $X_i$ has density $f_i$, and $X_i^*$ has density $f_i^*$, $i = 1, \dots, k$.
\end{theorem}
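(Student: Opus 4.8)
The plan is to reduce the theorem to a single majorization statement between the density $g:=f_1*\cdots*f_k$ of $X_1+\cdots+X_k$ and the density $h:=f_1^{*}*\cdots*f_k^{*}$ of $X_1^{*}+\cdots+X_k^{*}$ (both probability densities on $\R^n$), and then feed it into Jensen-type inequalities to treat every $r$ at once. Recall from \eqref{eq:definition of Renyi entropy POWER} that $N_r(X)=\left(\int_{\R^n}f_X^r\right)^{2/(n(1-r))}$ for $r\in(0,\infty)\setminus\{1\}$, so $N_r$ is an increasing function of $\int f_X^r$ when $r\in(0,1)$ and a decreasing function of it when $r>1$, while $r\in\{0,1,\infty\}$ are continuous limits. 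Hence it suffices to show that $h$ \emph{majorizes} $g$: writing $g^{\downarrow},h^{\downarrow}$ for the decreasing rearrangements on $(0,\infty)$ (layer-cake profiles), one has
\begin{equation*}
\int_0^a g^{\downarrow}(v)\,dv \;\le\; \int_0^a h^{\downarrow}(v)\,dv \quad\text{for all } a>0, \qquad\text{with}\qquad \int_{\R^n}g=\int_{\R^n}h=1 .
\end{equation*}

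The heart of the matter is the Brascamp--Lieb--Luttinger rearrangement inequality. Applied to the $k+1$ nonnegative functions $w,f_1,\dots,f_k$ in the $k$ variables $x_1,\dots,x_k\in\R^n$, with $w$ evaluated at the linear form $x_1+\cdots+x_k$ and each $f_i$ at the coordinate form $x_i$, it reads
\begin{equation*}
\int_{(\R^n)^k} w\!\left(\sum_{i=1}^{k}x_i\right)\prod_{i=1}^{k}f_i(x_i)\,dx_1\cdots dx_k \;\le\; \int_{(\R^n)^k} w^{*}\!\left(\sum_{i=1}^{k}x_i\right)\prod_{i=1}^{k}f_i^{*}(x_i)\,dx_1\cdots dx_k .
\end{equation*}
Integrating out everything except the value of $\sum_i x_i$ identifies the left side with $\int_{\R^n}w\,g$ and the right side with $\int_{\R^n}w^{*}\,h$, so $\int w\,g\le\int w^{*}\,h$ for all $w\ge 0$. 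Taking $w=\mathbbm{1}_E$ with $\Vol(E)=a$, using that $h$ is spherically symmetric decreasing (so $\int_{E^{*}}h=\int_0^a h^{\downarrow}$), and then taking the supremum over such $E$ (so that $\sup_{\Vol(E)=a}\int_E g=\int_0^a g^{\downarrow}$), we obtain $\int_0^a g^{\downarrow}\le\int_0^a h^{\downarrow}$ for every $a>0$. The identity of total masses is automatic, and one also has $\Vol(\supp h)\le\Vol(\supp g)$ since $\supp h$ and $\supp g$ are (closures of) the Minkowski sums $(\supp f_1)^{*}+\cdots+(\supp f_k)^{*}$ and $\supp f_1+\cdots+\supp f_k$ respectively, and Brunn--Minkowski gives $\Vol(A_1+\cdots+A_k)\ge\Vol(A_1^{*}+\cdots+A_k^{*})$. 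Thus $h$ majorizes $g$.

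It remains to convert majorization into the entropy comparison. For any convex $\Phi\colon[0,\infty)\to\R$ with $\Phi(0)=0$, $h$ majorizing $g$ yields $\int\Phi(g)\le\int\Phi(h)$. Taking $\Phi(t)=t^r$ for $r>1$ gives $\int g^r\le\int h^r$, hence $N_r(X_1+\cdots+X_k)\ge N_r(X_1^{*}+\cdots+X_k^{*})$; taking $\Phi(t)=-t^r$ for $r\in(0,1)$ (again convex with $\Phi(0)=0$) gives $\int g^r\ge\int h^r$, which again yields the inequality because $N_r$ is increasing in $\int g^r$ in this range; and $\Phi(t)=t\log t$ handles $r=1$. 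The endpoint $r=0$ is exactly the Brunn--Minkowski bound $\Vol(\supp g)\ge\Vol(\supp h)$ used above, and $r=\infty$ follows from $\|g\|_\infty=\lim_{a\to0^{+}}\tfrac1a\int_0^a g^{\downarrow}\le\lim_{a\to0^{+}}\tfrac1a\int_0^a h^{\downarrow}=\|h\|_\infty$; both endpoints are also recovered by letting $r\to0^{+}$ and $r\to\infty$ in the inequalities just established.

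The step I expect to be the main obstacle is not any single deep ingredient --- Brascamp--Lieb--Luttinger and Brunn--Minkowski do all the real work --- but the bookkeeping around them: invoking Brascamp--Lieb--Luttinger in exactly the multilinear form above, extracting the \emph{full} majorization of $g$ by $h$ (with care about flat level sets of $g$ in the identity $\sup_{\Vol(E)=a}\int_E g=\int_0^a g^{\downarrow}$, and about the equality constraint at total mass $1$, which genuinely uses the support-volume bound), and then tracking the sign of $r/(1-r)$ so that this one majorization delivers $N_r(X_1+\cdots+X_k)\ge N_r(X_1^{*}+\cdots+X_k^{*})$ uniformly over $r\in[0,\infty]$.
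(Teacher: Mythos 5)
Your argument is correct, but note that the paper itself offers no proof of Theorem \ref{th:rearrange}: it is quoted verbatim from Wang--Madiman \cite{WM14}, so the only comparison available is with that source, and your route (Brascamp--Lieb--Luttinger applied to $w(x_1+\cdots+x_k)\prod_i f_i(x_i)$, then indicator test functions $w=\mathbbm{1}_E$ to get that $f_1^*\ast\cdots\ast f_k^*$ majorizes $f_1\ast\cdots\ast f_k$, then monotonicity of $\int\Phi(\cdot)$ under majorization for convex $\Phi$ with $\Phi(0)=0$) is essentially the argument of the cited reference rather than a new one. Two small points of bookkeeping: the equality of total masses is automatic because $g$ and $h$ are probability densities, so it does not ``genuinely use'' the support-volume bound --- Brunn--Minkowski on the supports is needed only for the $r=0$ endpoint, where one should also be careful that the paper's $N_0$ is $\Vol^{2/n}(\supp)$ while the rearrangement only sees $\Vol(\{f_i>0\})$ (the inequality still goes the right way since $\Vol(\supp f_i)\geq\Vol(\{f_i>0\})$); and for $r\in(0,1)$ it is cleaner to argue via $\int\min(g,s)\,dx\geq\int\min(h,s)\,dx$ for all $s>0$ (equivalently $\int(g-s)_+\leq\int(h-s)_+$) together with the integral representation of concave $\phi$ with $\phi(0)=0$, which yields $\int g^r\geq\int h^r$ even when both sides may be infinite, avoiding any well-definedness caveat in the Hardy--Littlewood--P\'olya--Chong step.
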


It follows that to prove inequality \eqref{eq:Bobkov Marsiglietti REPI} it suffices to consider $X$ and $Y$ possessing spherically symmetric decreasing densities. Indeed, using Theorem \ref{th:rearrange} we would have 
\begin{equation}
N_r^{\alpha}(X+Y) \geq N_r^{\alpha}(X^* + Y^*) \geq N_r^{\alpha}(X^*) + N_r^{\alpha}(Y^*) = N_r^{\alpha}(X) + N_r^{\alpha}(Y),
\end{equation}
where the last equality comes from the equimeasurability of a density and its rearrangement. The same argument applies to inequality \eqref{eq:Bobkov Chistyakov REPI}. Motivated by this fact the authors replaced the exponential distribution in the example above with its spherically symmetric rearrangement, the Laplace distribution, to yield a tighter lower bound in an announcement of this work \cite{MarsigliettiMelbourne:isit}. Additionally, since spherically symmetric rearrangement is stable on the class of log-concave random vectors (see \cite[Corollary 5.2]{melbourne2018rearrangement}), one can reduce to random vectors with spherically symmetric decreasing densities, even under the log-concave restriction taken in this work.

\section{Acknowledgements}
The authors thank Mokshay Madiman for valuable comments and the explanation of the R\'enyi comparison results used in this work. The authors are also indebted to the anonymous reviewers whose suggestions greatly improved the paper, leading in particular to the inclusion of Theorem \ref{thm: bobkov chistyakov Repi for small r} and Proposition \ref{sharp-k-tuples}.

\appendix
\section{Proof of Lemma \ref{reverse}}

\begin{theorem}\normalfont{(\cite[Theorem 2.9]{FMW16})}\label{th:log}

For a log-concave function $f$ on $\mathbb{R}^n$, the map
	\begin{equation}
		\varphi(t) = t^n \int_{\mathbb{R}^n} f^t, \qquad t>0,
	\end{equation}
	is log-concave as well.

\end{theorem}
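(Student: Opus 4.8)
The plan is to realize $\varphi$ as a marginal (along the $x$-variable) of a log-concave function on $\R^n \times (0,\infty)$ and then invoke the Pr\'ekopa--Leindler inequality. Write $f = e^{-V}$, where $V \colon \R^n \to (-\infty,+\infty]$ is a proper convex function, the value $+\infty$ being allowed off the support of $f$. First I would introduce the perspective transform $U \colon \R^n \times (0,\infty) \to (-\infty,+\infty]$ defined by $U(x,t) = t\,V(x/t)$ and show that $U$ is jointly convex. This is a short direct computation: for points $(x_1,t_1),(x_2,t_2)$ and $\lambda \in (0,1)$, writing $t = (1-\lambda)t_1 + \lambda t_2$, one has
\[
U\big((1-\lambda)(x_1,t_1) + \lambda (x_2,t_2)\big) = t\, V\!\left( \tfrac{(1-\lambda)t_1}{t}\cdot\tfrac{x_1}{t_1} + \tfrac{\lambda t_2}{t}\cdot\tfrac{x_2}{t_2}\right) \ls (1-\lambda)\,U(x_1,t_1) + \lambda\, U(x_2,t_2),
\]
using convexity of $V$ with the nonnegative weights $\tfrac{(1-\lambda)t_1}{t}$ and $\tfrac{\lambda t_2}{t}$, which sum to $1$, and then multiplying through by $t$.

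Consequently $(x,t) \mapsto e^{-U(x,t)} = e^{-t V(x/t)}$ is log-concave on the convex set $\R^n \times (0,\infty)$. Next I would apply the fact that marginals of log-concave functions are log-concave (a standard consequence of the Pr\'ekopa--Leindler inequality): the map
\[
t \longmapsto \int_{\R^n} e^{-t V(x/t)}\, dx
\]
is log-concave on $(0,\infty)$. Finally, the change of variables $z = x/t$, with $dx = t^n\, dz$, gives $\int_{\R^n} e^{-t V(x/t)}\, dx = t^n \int_{\R^n} e^{-t V(z)}\, dz = t^n \int_{\R^n} f^t(z)\, dz = \varphi(t)$, which is exactly the assertion.

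The main obstacle is not the geometric idea but the measure-theoretic bookkeeping around it: one must make sure $e^{-tV(x/t)}$ is a genuine (measurable, a.e.\ finite) log-concave function so that Pr\'ekopa--Leindler applies, and one should dispose of the degenerate possibilities --- $\varphi(t)=+\infty$ for small $t$ (read log-concavity with the convention that $\log\varphi$ is concave with values in $[-\infty,+\infty]$, and restrict to the sub-interval where $\varphi$ is finite), $\varphi \equiv 0$, or $V$ failing to be lower semicontinuous (replace $V$ by its closure, which affects neither the integrals nor convexity). The endpoint $t \to 0^+$ requires no separate treatment, since Pr\'ekopa--Leindler already delivers concavity of $\log\varphi$ on the full open ray $(0,\infty)$.
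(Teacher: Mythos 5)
Your proposal is correct and coincides with the argument the paper relies on (the Fradelizi--Madiman--Wang proof, which also appears verbatim in the source's commented-out block): joint convexity of the perspective function $U(x,t)=tV(x/t)$, log-concavity of marginals via Pr\'ekopa--Leindler, and the change of variables $z=x/t$ producing the factor $t^n$. The additional remarks on handling $V=+\infty$ off the support and possible infinite values of $\varphi$ are sensible but do not change the substance of the argument.
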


\begin{proof}[Proof of Lemma \ref{reverse}]
	The proof is a straightforward consequence of Theorem \ref{th:log}. What remains is an algebraic computation.  When $X$ has density $f$, one has $\varphi(1) = 1$.  Write $1,p,q$ in convex combination, and unwind the implication of $\varphi$ being log-concave.  We will show the result in the case that we need $0<p<q<1$, the other arguments are similar.  In this case, $\lambda p + (1-\lambda) 1 = q$ for $\lambda = \frac{1-q}{1-p} \in (0,1)$.  By log-concavity,
	\begin{equation}
		\varphi(q) \geq \varphi^\lambda(p) \varphi^{1-\lambda}(1),
	\end{equation}
	which is
	\begin{equation}
		q^n \int f^q \geq \left( p^n \int f^p \right)^{\frac{1-q}{1-p}}.
	\end{equation}
	Since $1-q > 0$ raising both sides to the power $2/n(1-q)$ preserves the inequality, and we have
	\begin{equation}
		q^{2/(1-q)} N_q(X) \geq p^{2/(1-p)} N_p(X).
	\end{equation}
	which implies our result.
\end{proof}

\section{Positivity of $A(p,q,r)$}
By \eqref{eq: x y description of A}, it suffices to show that
\begin{equation}
    W(x,y) = \log \left( \frac{(x+y)^{x+y} (x+1)^{x+1} (y+1)^{y+1}}{x^x y^y (x+y+1)^{(x+y+1)}} \right) > 0,
\end{equation}
for $x, y >0$.  First observe that for $y > 0$,
\begin{equation}
    \lim_{x \to 0} W(x,y) = 0.
\end{equation}
Computing,
\begin{equation}
    \frac{\partial}{\partial x} W(x,y) = \log \left( \frac{ (x+y)(x+1)}{(x+y+1)x} \right),
\end{equation}
which is always greater than $0$, since
\begin{equation}
    (x+y)(x+1) > (x+y+1)x
\end{equation}
reduces to $y >0$.  Thus $W(x,y) > W(0,y) = 0$ for $x,y>0$, and our result follows.
\bibliographystyle{plain}
\bibliography{bibibi}

\vspace*{1cm}

\noindent Arnaud Marsiglietti \\
Department of Mathematics \\
University of Florida \\
Gainesville, FL 32611 \\
E-mail address: a.marsiglietti@ufl.edu

\vspace{0.8cm}

\noindent James Melbourne \\
Electrical and Computer Engineering \\
University of Minnesota \\
Minneapolis, MN 55455, USA \\
E-mail address: melbo013@umn.edu

\end{document}